%% file: final_main.tex
\documentclass[11pt]{article}
\usepackage[T1]{fontenc}
\usepackage{lmodern,microtype}
\usepackage{inconsolata}
\usepackage{libertine}
\usepackage[margin=1in]{geometry}
\usepackage{amsmath,amssymb,amsthm,mathtools,bbm}
\usepackage{enumitem}
\usepackage{nicefrac}
\usepackage{thm-restate}
\usepackage{algorithm,algpseudocode}
\usepackage{todonotes}
\usepackage[utf8]{inputenc}
\usepackage{booktabs,tabularx,array}
\usepackage{xcolor}
\usepackage{tikz}
\usetikzlibrary{arrows.meta,calc,positioning}
\usepackage{float,needspace}

\usepackage[colorlinks=true,linkcolor=blue,citecolor=blue,urlcolor=blue,
            hypertexnames=false]{hyperref}
\usepackage[capitalise,nameinlink]{cleveref}

\newtheorem{theorem}{Theorem}[section]
\newtheorem{lemma}[theorem]{Lemma}
\newtheorem{proposition}[theorem]{Proposition}
\newtheorem{corollary}[theorem]{Corollary}
\newtheorem{remark}[theorem]{Remark}
\newtheorem{claim}[theorem]{Claim}
\theoremstyle{definition}

\numberwithin{equation}{section}

\makeatletter
\addtotheorempostheadhook{%
  \Hy@MakeCurrentHrefAuto{statement}%
  \edef\osp@statementanchor{\@currentHref}%
  \AddToHookNext{para/begin}{%
    \Hy@raisedlink{%
      \hyper@anchorstart{\osp@statementanchor}\hyper@anchorend}}}
\makeatother

\newcommand{\R}{\mathbb{R}}
\newcommand{\Pb}{\mathbb{P}}
\newcommand{\eps}{\varepsilon}
\newcommand{\ip}[2]{\left\langle #1,#2\right\rangle}
\newcommand{\norm}[1]{\lVert #1\rVert_2}
\newcommand{\OS}{\operatorname{OS}}
\newcommand{\Res}{\mathrm{res}}
\newcommand{\tot}{\mathrm{tot}}
\newcommand{\err}{\mathrm{err}}
\newcommand{\anc}{\mathrm{anc}}
\newcommand{\1}{\mathbbm{1}}

\setlist[itemize]{leftmargin=1.4em,itemsep=2pt,topsep=3pt}
\setlist[enumerate]{leftmargin=1.7em,itemsep=2pt,topsep=3pt}
\title{The Online Steinitz Problem}
\author{
Matteo Russo\thanks{EPFL, Switzerland, \textsf{\{matteo.russo, jiaye.wei\}@epfl.ch}}
\and
Jiaye Wei\footnotemark[1]
} 
\date{}

\begin{document}
\maketitle

\begin{abstract}
\noindent Inspired by Steinitz's celebrated result (1913) and recent
online vector-balancing results of Kulkarni et al. (2024) and
Aden-Ali (2026), we introduce the \emph{online Steinitz problem}:
$T$ vectors in $\mathbb R^d$, each of $\ell_2$-norm at most $1$ and
with total sum zero, arrive one at a time and must be assigned
immediately and irrevocably to unused positions of an array.
The goal is to keep every prefix sum of the resulting order small.
We first obtain deterministic prefix-discrepancy bounds of
$O(\log T)$ for $d=1$ and $O(\sqrt T)$ for $d\ge2$.
Our main contribution is a general reduction that converts
deterministic and randomized online vector-balancing algorithms,
which assign signs to arriving vectors, into online ordering algorithms.
Combined with recent vector-balancing results, this yields a
randomized algorithm with prefix discrepancy
$O(\log T+\sqrt{d\log T})$ and running time $O(dT\log T)$.
The guarantee holds with high probability against an oblivious
adversary. The reduction also gives bounds for sparse vectors and
extends to inputs in arbitrary compact convex bodies.
Finally, we prove matching lower bounds in both the deterministic
and randomized cases.
\end{abstract}

\input{sections/01_introduction}
\clubpenalty=10000
\widowpenalty=10000
\input{sections/02_reduction}
\input{sections/03_deterministic}

\input{sections/04_corollaries}
\input{sections/05_lower_bounds}

\section*{Acknowledgments}

The authors thank Rom Pinchasi for the fruitful discussions and initial ideas for the deterministic algorithm. The authors used ChatGPT to assist with mathematical arguments. Specifically, the initial quota correction argument in the reduction was suggested by ChatGPT-5.5 and subsequently refined by the authors. The lower-bound argument was suggested by ChatGPT-6 Astra. The authors take full responsibility for all the content of this work.

\bibliographystyle{alpha}
\bibliography{references}

\end{document}

%% file: sections/01_introduction.tex
\section{Introduction}
\label{sec:introduction}

The classical Steinitz lemma asks how to order a zero-sum family of bounded vectors so that every prefix sum remains small~\cite{Steinitz13}. For vectors $v_1,\ldots,v_T\in\R^d$ with
$\|v_t\|_2\le1$ and $\sum_{t=1}^T v_t=0$, the goal is to prove there exists an ordering distributes cancellation throughout the sequence instead
of postponing it until the end. Indeed, Steinitz \cite{Steinitz13} proved that there is a permutation
$\pi:[T]\to[T]$ such that
\[
 \max_{0\le k\le T}\left\|\sum_{j=1}^k v_{\pi(j)}\right\|_2\le d.
\]
Establishing an $O(\sqrt d)$ upper bound for the Euclidean Steinitz
constant has been a long-standing open problem. On 6 October 2026,
OpenAI announced a proof of this bound
\cite{OpenAIannouncement26}, which, if confirmed, would determine
the constant depending on $d$ up to absolute factors.\footnote{The authors have not independently verified the announced proof, which none of this paper's results requires.} Beyond the importance of the mathematical question itself, the principle of the Steinitz lemma serves as a standard tool in discrepancy theory and integer programming, e.g.,
\cite{Chobanyan1994,GrinbergSevastyanov1980,EisenbrandWeismantel2020}.

We introduce the \emph{online Steinitz problem} (OSP). In this problem, there are $T$ initially
empty output positions, and vectors $v_1,\ldots,v_T$ that arrive sequentially with $\|v_t\|_2\le1$ and $\sum_{t=1}^T v_t=0$, identically to the offline case.
Upon seeing $v_t$, the algorithm must assign it immediately and irrevocably
to an unused position $p_t\in[T]$, thereby building a permutation $\pi$ on the fly such that $\pi(p_t)=t$. The completed output is therefore $v_{\pi(j)}$ and its \emph{prefix discrepancy} by the end of the assignment is $\max_{0\le k\le T}\|\sum_{j=1}^k v_{\pi(j)}\|_2$. Although the prefix discrepancy is an identical measure in both offline and the online case, the requirement of an immediate and irrevocable placement and the limited future knowledge leads to a substantially harder algorithmic task that we seek to investigate:\medskip

\noindent \textbf{Question.} \emph{What is the worst-case online prefix discrepancy for immediate and irrevocable vector decisions, and how does this compare with the offline optimum?}\medskip

We study this question under the \emph{oblivious adversarial} model, which fixes the entire input independently of the algorithm's randomness, and also under the \emph{adaptive adversarial} model, which may choose each vector after observing the previous assignments. This means that, against the former kind of adversary, randomness may help, while against the latter it does not. We therefore write
$\OS_{\rm rand}(d,T,\delta)$ for the smallest discrepancy bound that a
randomized algorithm can guarantee with probability at least $1-\delta$
on every fixed input. The corresponding optimal deterministic
worst-case bound against adaptive inputs is denoted by
$\OS_{\det}(d,T)$.

The problem is reminiscent of \emph{online vector balancing} (OVB) \cite{BansalJ0S20,AlweissLS21,KulkarniReisRothvoss2024,AdenAli2026}: in that case, an OVB algorithm chooses a sign for each arriving vector and controls signed prefix sums in arrival order. On the other hand, an OSP algorithm chooses among all vacant positions, and one placement may affect many output prefixes, which are evaluated only after the stream ends, and not after each arrival. We defer the reader to the survey of Bansal and Nikolov~\cite{BansalN26} for an extensive treatment of algorithmic discrepancy in its various forms, including offline and online.

\subsection{Our results}
\label{sec:results}

Despite their differences, our main upper-bound contribution is a general reduction from OSP to OVB. We first describe this general reduction, followed by its deterministic and randomized applications,
further corollaries, and matching lower bounds.

\paragraph{A general reduction.}
In \cref{sec:blackbox}, we present a black-box reduction from online vector balancing to online Steinitz ordering. The reduction builds an OSP \emph{router} from OVB subroutines, which we call \emph{signers}. Its ordering guarantee is stated in terms of the signers’ final vector and count bounds, or alternatively their subgaussian estimates. As a result, the same routing framework can use a range of deterministic or randomized signers, and any improvement in their balancing performance immediately yields stronger ordering bounds. The reduction also applies when the total sum of the input vectors is arbitrary and unknown in advance, and it ensures that every prefix remains close to its proportional share of this total. In addition, if the inputs lie in a known compact convex body $K$ of affine dimension $s$, then any Euclidean proportional-prefix bound $\rho$ implies that the residuals lie in $(s\rho/2)(K-K)$, where, as usual, $K-K=\{x-y:x,y\in K\}$. This extension does not require $K$ to be symmetric, and it preserves deterministic and probabilistic guarantees under the same adversarial model.

\paragraph{Deterministic bounds.}
Using the OSP router given by the reduction instantiated with deterministic OVB signers, in \cref{sec:deterministic}, we show that the optimal deterministic discrepancy is 
\[
    \OS_{\rm det}(d,T) \in \begin{cases}
        O(\log T),&d=1,\\[2mm]
        O(\sqrt T),&d\ge2.
    \end{cases}
\]
As mentioned previously, both upper bounds hold against
adaptive adversaries, and the higher-dimensional lower bound already
holds in a fixed plane.

\paragraph{Tight bounds against an oblivious adversary.}
In \cref{sec:corollaries}, we instantiate the general reduction with Aden-Ali's randomized OVB signer \cite{AdenAli2026} and obtain
a prefix discrepancy of
\begin{equation}
    \OS_{\rm rand}(d,T,\delta)
     \in O\!\left(
        \log T+\sqrt{\bigl(d+\log(1/\delta)\bigr)\log T}
    \right),
    \label{eq:main-tight}
\end{equation}
with probability at least $1-\delta$. This guarantee is achieved efficiently, with the overall procedure having a running time of $O(dT\log T)$, and furthermore yields a guarantee of $ O\left(\log T + \sqrt{\log T \cdot \log(1/\delta)}\right)$ in the $\ell_\infty$-norm. Furthermore, we can instantiate our reduction with a different OVB signer for zero-sum inputs in $[-1,1]^d$ with at most $\kappa$ nonzero coordinates, and $\ell_2$-norm as large as $\sqrt\kappa$. Specifically, the signer of Altschuler--Tikhomirov
\cite{AltschulerTikhomirov2026} gives
$O(\sqrt\kappa\log T)$ coordinatewise discrepancy whenever the
failure bound in \cref{cor:at-overview} is at most $\delta$.

\paragraph{Matching lower bounds.}
In \cref{sec:lower}, all hard inputs consist of an initial stream
followed by identical vectors that cancel its total.
An all-ones initial stream and Tijdeman--Wagner~\cite{TW80} give
an $\Omega(\log T)$ lower bound; coordinate cycles and Roth's
method~\cite{Roth54} give
$\Omega(\sqrt{d_*\log(2+T/d_*)})$, where $d_*=\min\{d,T\}$.
Both hold for every $0<\delta\le1/2$.
Their sum matches the upper bound of \eqref{eq:stopped-tree} 
whenever $T^{-\beta}\le\delta\le1/2$ for fixed $\beta>0$.
An adaptive initial stream in a fixed plane gives the matching
deterministic $\Omega(\sqrt T)$ lower bound for $d\ge2$.

\subsection{Related work}
\label{sec:related-work}

\paragraph{The Steinitz lemma and its extensions.}
In the offline setting, the classical Steinitz lemma \cite{Steinitz13} guarantees that a
zero-sum family of bounded vectors admits an ordering whose prefix
sums are bounded independently of the number of vectors.
Grinberg and Sevastyanov \cite{GrinbergSevastyanov1980} established
the bound $d$ in every $d$-dimensional normed space.
For the $\ell_2$-norm, improving the dimension dependence to
$O(\sqrt d)$ remains a longstanding conjecture, with a matching
$\Omega(\sqrt d)$ lower bound.
Recent work of Ambrus and Heck \cite{AmbrusHeck2026} relates this
problem to ordering nearly unit vectors with arbitrary total sum,
measuring each prefix against its proportional share of that total.
Another direction concerns simultaneous rearrangements:
Oertel et al. \cite{OertelPaatWeismantel2024}
proved a colorful Steinitz lemma that permutes several sequences
while controlling their combined prefixes independently of the
number of sequences. B\'ar\'any \cite{Barany2024Matrix} improved
their $40d^5$ bound to $4d-2$.
Steinitz lemma also has algorithmic applications beyond discrepancy, as for instance, the work of
Eisenbrand and Weismantel \cite{EisenbrandWeismantel2020} which used it
to establish proximity bounds and faster algorithms for integer
programming.

\paragraph{Online vector balancing and scheduling.}
Online vector balancing replaces rearrangement by sequential sign
choices. Bansal et al. \cite{BansalJ0S20}
studied stochastic inputs drawn independently from an arbitrary
distribution and obtained applications to geometric discrepancy.
For oblivious adversarial inputs, Alweiss et al.
\cite{AlweissLS21} introduced the self-balancing walk, giving
efficient algorithms with logarithmic $\ell_\infty$ prefix
discrepancy for Euclidean unit-ball vectors.
Kulkarni et al. \cite{KulkarniReisRothvoss2024}
subsequently obtained a universal subgaussian guarantee for each
signed prefix. This implies $O(\sqrt{\log T})$ maximum
$\ell_\infty$ prefix discrepancy with high probability, matching
their lower bound against oblivious adversaries.
Aden-Ali \cite{AdenAli2026} obtained the same asymptotic guarantee
with an $O(dT)$-time algorithm.
For coordinate-bounded vectors with at most $\kappa$ nonzero
coordinates, Altschuler and Tikhomirov
\cite{AltschulerTikhomirov2026} obtained
$O(\sqrt{\kappa})$ online $\ell_\infty$ prefix discrepancy above
a near-logarithmic sparsity threshold. Online vector scheduling \cite{ImEtAl2019} likewise requires
irrevocable allocation of arriving vectors, but distributes
nonnegative demand vectors among machines to optimize their loads,
whereas OSP assigns vectors to ordered positions and exploits
cancellation to control all prefix sums.

\paragraph{Signing, rearrangement, and geometric discrepancy.}
The connection between signing and rearrangement predates their
online versions. B\'ar\'any and Grinberg \cite{BaranyGrinberg1981}
proved that every prescribed sequence of unit vectors in a
$d$-dimensional normed space admits offline signs for which all
signed prefixes have norm at most $2d-1$.
Chobanyan's transference theorem \cite{Chobanyan1994} shows that
a uniform bound for this offline prefix-signing problem also
bounds the Steinitz constant, without increasing the bound. However, this does not transfer to the online setting as argued in \Cref{sec:reduction-remarks}.
Controlling only the final signed sum is a different problem.
For example, Ambrus and Bozzai \cite{AmbrusBozzai2024} proved
sharp Euclidean bounds for colorful vector balancing, where one
selects a vector from each prescribed family and controls the
sum of the selected vectors.
Our lower bounds use another connection to discrepancy: points record
arrival times and assigned positions. Schmidt's planar rectangle-discrepancy
theorem~\cite{Schmidt1972} gives the deterministic scalar bound;
Tijdeman--Wagner's strengthening~\cite{TW80} and a restricted form of
Roth's orthogonal-function argument~\cite{Roth54} yield the oblivious bounds.
For a broader account of these topics and their algorithmic
connections, see the monograph manuscript of Bansal and Nikolov
\cite{BansalN26}.

\paragraph{Online sorting and online TSP.}
Our placement model is also closely related to online sorting
\cite{AamandEtAl2023} and its extension to online TSP
\cite{AbrahamsenEtAl2024, Bertram25}: arriving numbers or points must be
assigned immediately and irrevocably to unused array positions.
It is worth noting that the work of Azar et al. \cite{AzarPV26}
further studies the tradeoff between array size and competitiveness
for online metric TSP, obtaining a deterministic
$O(\log^3 T)$-competitive algorithm with $2T$ cells.
Their placement constraint coincides with ours when the array has
exactly $T$ cells, but their objective controls distances between
consecutive entries rather than cumulative vector sums.
Their tree-based placement scheme organizes points by metric
proximity, whereas our reduction uses online signing to balance
the sums routed to subintervals. More generally, these problems minimize the total distance between consecutive
entries, whereas OSP controls the norm of every prefix sum.
Geometrically, their inputs are points to connect by a short path, while
ours are steps to arrange into a walk that stays close to the
origin through cancellation.

%% file: sections/02_reduction.tex
\section{The signing-to-ordering reduction}
\label{sec:overview}
\label{sec:blackbox}

The reduction repeatedly divides the output positions into two nearly
equal intervals. At each division, an online signer proposes a child
for the arriving vector, and the router can only override this choice when it has to avoid
an interval that is already full. We first describe this algorithm,
then state the signing guarantees that suffice and prove how its local
splits control every output prefix.

\subsection{The ordering algorithm}
\label{sec:model}

We first introduce some notation, beginning with the definition of residuals.

\paragraph{Residuals.}
For the completed ordering $v_\pi=(v_{\pi(1)},\ldots,v_{\pi(T)})$,
write $\tot:=\sum_{t=1}^T v_t$ and define the prefix residual for each $k \in [T]$ and the maximum residual by
\begin{equation}
    \Res_k(v_\pi):=
    \sum_{j=1}^k v_{\pi(j)}-\frac{k}{T}\tot,
    \qquad
    \Res_{\max}(v_\pi):=
    \max_{0\le k\le T}\norm{\Res_k(v_\pi)}.
    \label{eq:residual}
\end{equation}
These residuals measure deviations from proportional shares of the total and become ordinary prefix sums when $\tot=0$. Hence, for some upper bound $U \ge 0$ achieved deterministically or in high probability, we have that, if $\Res_{\max} \le U$ then $\OS \le U$, since $\tot=0$.
We next describe the position tree and the meta-algorithm needed for our reduction.

\paragraph{The position tree.}
Our algorithm assigns each arriving vector to an output position by
routing it through a balanced binary tree.
This tree has the output positions as its leaves, and so has height $h = \lceil\log_2 T\rceil$.
Each node $u$ represents a consecutive interval $I_u\subseteq[T]$ which consists of the output positions labeling the leaves in the subtree rooted at $u$.
The node $u$ has capacity $m_u:=|I_u|$ with children $u_L,u_R$ of capacities $\lceil m_u/2\rceil$ and $\lfloor m_u/2\rfloor$, respectively.

As illustrated in \cref{alg:tree-routing,fig:intro-routing}, each
internal node uses a persistent local signer $\mathsf{Sign}_u$ to
propose $+1$ (left) or $-1$ (right). The router follows this proposal
unless the requested child is full, in which case it uses the other
child. Repeating the decision until a leaf is reached assigns the
vector's output position immediately. During the descent, the
algorithm records in $N_c$ how many positions are already filled in
child $c$, and in $X_c$ the sum of the corresponding vectors.
The arguments of $\mathsf{Sign}_u$ are the arriving vector $v_t$, the child sums $X_{u_L},X_{u_R}$, and the remaining capacities $m_{u_L}-N_{u_L},m_{u_R}-N_{u_R}$. Here the semicolon separates $v_t$
from the routing statistics.
In particular, the scalar rule in \cref{sec:deterministic} uses these
statistics directly, while, for a Euclidean signing subroutine,
the node submits the augmented vector $(v_t,1)/\sqrt2$, so the extra
coordinate balances left and right counts along with the original
vector coordinates.

Each local signer receives every vector reaching its node, even after
one child fills, and updates its private state with its proposed sign.
We stress that only the router overrides the destination. Moreover, we also note that randomized copies have
independent tapes, independent of the input, and different rules may
be used at different node sizes.

\begin{algorithm}[tbp]
\caption{Tree routing with local proposal rules}
\label{alg:tree-routing}
\algrenewcommand{\algorithmicindent}{1em}
\algtext*{EndFor}
\algtext*{EndWhile}
\algtext*{EndIf}
\begin{algorithmic}[1]
\Require Horizon $T$, local routines $\mathsf{Sign}_u$, and vectors $v_1,\ldots,v_T$ arriving one at a time.
\Ensure An online permutation $\pi$, with $\pi(p_t)=t$.
\State Build the balanced position tree with capacities $m_u=|I_u|$.
\For{each internal node $u$}
    \State Initialize $\mathsf{Sign}_u$ once.
    \For{$c\in\{u_L,u_R\}$}
        \State $N_c\gets0$, \quad $X_c\gets0$.
    \EndFor
\EndFor
\For{$t=1,\ldots,T$}
    \State Receive $v_t$ and set $u\gets\mathrm{root}$.
    \While{$u$ is not a leaf}
        \State $\eps\gets\mathsf{Sign}_u
        \bigl(v_t;X_{u_L},X_{u_R},
        m_{u_L}-N_{u_L},m_{u_R}-N_{u_R}\bigr)$.
        \State $\widehat\eps\gets\eps$.
        \Comment{Proposed versus actual routing sign}
        \If{$N_{u_L}=m_{u_L}$}
            \State $\widehat\eps\gets-1$.
        \ElsIf{$N_{u_R}=m_{u_R}$}
            \State $\widehat\eps\gets+1$.
        \EndIf
        \If{$\widehat\eps=+1$}
            \State $c\gets u_L$.
        \Else
            \State $c\gets u_R$.
        \EndIf
        \State $N_c\gets N_c+1$, \quad $X_c\gets X_c+v_t$.
        \State $u\gets c$.
    \EndWhile
    \State $p_t\gets\min I_u$, \quad $\pi(p_t)\gets t$.
\EndFor
\end{algorithmic}
\end{algorithm}

\begin{figure}[tbp]
\centering
\begin{tikzpicture}[
    x=1.05cm,y=1.00cm,
    every node/.style={font=\small},
    interval/.style={
        draw=black!45, text=black!65, rounded corners=2pt,
        minimum width=1.05cm, minimum height=0.48cm,
        inner sep=2pt, line width=0.5pt
    },
    slot/.style={
        draw=black!45, text=black!65,
        minimum width=0.55cm, minimum height=0.45cm,
        inner sep=0pt, line width=0.5pt
    },
    active/.style={draw=black, text=black, fill=black!9,
                   line width=0.85pt},
    edge/.style={-{Stealth[length=1.5mm]},
                 draw=black!40, line width=0.5pt},
    route/.style={-{Stealth[length=1.8mm]},
                  draw=black, line width=1pt}
]
\node[interval,active] (root) at (4.5,3) {$[1,8]$};
\node[interval]        (L)    at (2.5,2) {$[1,4]$};
\node[interval,active] (R)    at (6.5,2) {$[5,8]$};
\node[interval]        (LL)   at (1.5,1) {$[1,2]$};
\node[interval]        (LR)   at (3.5,1) {$[3,4]$};
\node[interval,active] (RL)   at (5.5,1) {$[5,6]$};
\node[interval]        (RR)   at (7.5,1) {$[7,8]$};

\foreach \i in {1,...,8} {
    \ifnum\i=6
        \node[slot,draw=black,fill=black!80,text=white,
              line width=0.85pt] (s\i) at (\i,0) {$\i$};
    \else
        \node[slot] (s\i) at (\i,0) {$\i$};
    \fi
}

\foreach \parent/\child in {
    root/L, L/LL, L/LR, R/RR,
    LL/s1, LL/s2, LR/s3, LR/s4, RL/s5, RR/s7, RR/s8}
    \draw[edge] (\parent.south) -- (\child.north);

\node (incoming) at (4.5,3.85) {$v_t$};
\foreach \parent/\child in {incoming/root, root/R, R/RL, RL/s6}
    \draw[route] (\parent.south) -- (\child.north);

\node[font=\footnotesize] at (4.5,-0.55) {output positions};
\end{tikzpicture}
\caption{Routing an arriving vector $v_t$ when $T=8$.
At each visited internal node, the local rule selects one of
the two child intervals. The highlighted path assigns $v_t$
to output position $6$.}
\label{fig:intro-routing}
\end{figure}

\subsection{The reduction guarantee}

To state the guarantee of \cref{alg:tree-routing}, we first identify
which local splits contribute to the discrepancy of a given output
prefix. Since the node intervals are consecutive, a prefix cuts
through at most one interval at each depth. For each prefix $[k]$, where $0\le k\le T$, the corresponding \emph{boundary path} $P(k)$ consists of the internal nodes whose
intervals meet both $[k]$ and $[T]\setminus[k]$, with $|P(k)|\le h$ and
$P(0)=P(T)=\varnothing$. Every ancestor of a node in $P(k)$ also belongs to $P(k)$, hence these nodes form a path which is ordered by increasing depth. 
We will show that the prefix residual $\Res_k(v_\pi)$ is a
weighted sum of the split errors along this path. The reduction
therefore depends on how well the local signers control these errors,
which we quantify next.

\paragraph{Signers.}
An online signer returns $\eps_i\in\{-1,+1\}$ upon receiving $z_i$.
We say it has subgaussian scale $\Sigma\in[0,\infty)$ if, for every
fixed stream $z_1,\ldots,z_m$ in the Euclidean unit ball $B_2^D$
and every $\theta\in\R^D$,
\begin{equation}
    \mathbb E\exp\!\left(
        \ip{\theta}{\sum_{i=1}^m\eps_i z_i}
    \right)
    \le
    \exp\!\left(\frac{\Sigma^2}{2}\norm{\theta}^2\right).
    \label{eq:terminal-mgf}
\end{equation}
The expectation is over its randomness, independent of the input, which gives Gaussian-type tails in every direction.
Our randomized reduction uses a common scale $\Sigma$ in dimension
$d+1$ for all node capacities, and needs only this final signed-sum
guarantee, rather than a guarantee for every signed prefix.

The following theorem separates the property needed from the signer
from the resulting ordering guarantee. Terminal vector and count
bounds give a sum of local errors, while a subgaussian bound permits a
sharper estimate for their combined effect.

\begin{theorem}[Signing-to-ordering reduction]
\label{thm:blackbox-main}
\Cref{alg:tree-routing} assigns every vector immediately to an
unused position and does not need to know the input total. Randomized
local signers use independent tapes, independent of the input.
\begin{enumerate}[label=\textup{(\roman*)}]
\item \textbf{Terminal vector and count bounds.}
Suppose the input vectors have norm at most one in a fixed norm
$\lVert\cdot\rVert$. At a node receiving $x_1,\ldots,x_m$, suppose the
signer's proposed signs satisfy
\[
 \left\lVert\sum_{i=1}^m\eps_i x_i\right\rVert\le V(m),
 \qquad
 \left|\sum_{i=1}^m\eps_i\right|\le W(m).
\]
Whenever these inequalities hold at all internal nodes, every output
prefix satisfies
\begin{equation}
 \lVert\Res_k(v_\pi)\rVert
 \le\sum_{u\in P(k)}
 \left(\frac{V(m_u)+W(m_u)}2+\1\{m_u\text{ odd}\}\right).
 \label{eq:reduction-norm}
\end{equation}
If they hold for every local stream, this is a deterministic guarantee
against adaptive inputs. If, for each fixed length-$m$ stream, they
hold jointly with probability at least $1-p(m)$, then for every
oblivious input \eqref{eq:reduction-norm} holds simultaneously for all
prefixes with probability at least $1-\sum_u p(m_u)$.

\item \textbf{Subgaussian terminal sums.}
Suppose the inputs are in the Euclidean unit ball and the signer has
scale $\Sigma\ge1$ in the sense of \eqref{eq:terminal-mgf}, in dimension
$d+1$ and for every node capacity. Then, for every oblivious input
and $0<\delta<1$, with probability at least $1-\delta$,
\begin{equation}
 \Res_{\max}(v_\pi)
   \le C\Sigma\left(\log T+\sqrt{\log T\bigl(d+\log(1/\delta)\bigr)}\right),
 \label{eq:reduction-subgaussian}
\end{equation}
where $C$ is an absolute constant.
\end{enumerate}
Both parts require only the \emph{terminal} guarantee for each local
stream. For zero-sum inputs, the residuals are the ordinary prefix sums.
\end{theorem}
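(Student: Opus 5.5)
The proof rests on decomposing each prefix residual along the boundary path. For an internal node $u$ with children $u_L,u_R$, let $S_c$ denote the final sum $X_c$ of vectors routed to child $c$ and $N_c$ the final count. Since every leaf receives exactly one vector, $N_{u_L}=m_{u_L}$ and $N_{u_R}=m_{u_R}$ at the end. We define the \emph{split error}
\[
 e_u:=S_{u_L}-\frac{m_{u_L}}{m_u}S_u,
\]
which measures how far the left child's sum deviates from its proportional share of the parent's sum. For a prefix $[k]$, we write $[k]$ as a disjoint union of the left siblings hanging off the boundary path, possibly plus a final partial piece. Telescoping down $P(k)$ then expresses $\Res_k$ as a combination $\sum_{u\in P(k)}\alpha_u e_u$. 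Here $\alpha_u=\pm1$ or, more precisely, a ratio of capacities in $[0,1]$.

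Concretely, we induct on depth with the claim that, for $u\in P(k)$ and $k_u:=|[k]\cap I_u|$, we have $\|S([k]\cap I_u)-\tfrac{k_u}{m_u}S_u\|\le\sum_{w\in P(k),\,w\preceq u\text{ descendants}}\|e_w\|$. If the cut lies in $u_R$, the left child is fully included. Then
\[
S([k]\cap I_u)-\tfrac{k_u}{m_u}S_u=e_u\bigl(1-\tfrac{k_u-m_{u_L}}{m_{u_R}}\bigr)+\bigl(\text{the same quantity for }u_R\bigr),
\]
using $S_{u_R}=S_u-S_{u_L}$. The case of a cut in $u_L$ is symmetric with coefficient $k_u/m_{u_L}$. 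Both coefficients lie in $[0,1]$, so the triangle inequality gives $\|\Res_k\|\le\sum_{u\in P(k)}\|e_u\|$.

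The next step is to bound $\|e_u\|$ by the signer's guarantees. Let $\eps_i$ be the proposed signs and $\widehat\eps_i$ the actual ones at $u$. Before either child is full, these agree. After one child fills, the router forces all remaining vectors to the other child. Suppose the right child filled first. Then $\sum_i\eps_i\ge$ (something), and the number of overridden vectors is $r$. Counting gives $r\le\frac{W+\1\{m_u\text{ odd}\}}{2}$, roughly, because $\sum\widehat\eps_i=m_{u_L}-m_{u_R}\in\{0,1\}$ and each override changes the signed count by $2$ in a consistent direction. More carefully, once a child is full, every remaining proposed sign that pointed to it is flipped, and the signed count of actual signs is fixed. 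This yields $2r\le|\sum\eps_i|+1$.

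Writing $\sum_i\widehat\eps_i x_i=2S_{u_L}-S_u$, we get
\[
2e_u=\Bigl(\sum\widehat\eps_ix_i\Bigr)+\Bigl(1-\tfrac{2m_{u_L}}{m_u}\Bigr)S_u .
\]
The second term is at most $\frac{1}{m_u}\cdot m_u=1$ for odd $m_u$, since $\|S_u\|\le m_u$. Also
\[
\Bigl\|\sum\widehat\eps_ix_i\Bigr\|\le V+2r\le V+W+1 .
\]
Fine-tuning these constants should give the bound $\frac{V+W}{2}+\1\{m_u\text{ odd}\}$, which is the main bookkeeping step. In the probabilistic version of (i), the inputs seen by node $u$ are determined by the tapes of ancestors, which are independent of $u$'s tape. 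So we condition on those tapes and apply the per-stream failure probability, then take a union bound over the nodes.

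For (ii), we apply the same decomposition to the augmented vectors $z_i=(x_i,1)/\sqrt2$. The split error $e_u$ is controlled by $\sqrt2\,\|\sum\eps_iz_i\|$ plus the override term. The override term is bounded by the last coordinate of the same signed sum, $|\sum\eps_i|$, plus $1$. Hence $\|e_u\|\lesssim\|Z_u\|+1$, where $Z_u=\sum\eps_iz_i$, and $\Res_k$ is bounded by $h+\sum_{u\in P(k)}\|Z_u\|$.

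The main obstacle is to sum along a path of length $h$ without losing a factor $\sqrt h$ on the dimension term. A naive union bound gives $h\Sigma\sqrt{d+\log(T/\delta)}$. The approach is to exploit conditional independence down the path. Conditioned on the ancestors, each $Z_u$ is subgaussian with scale $\Sigma$ given the past, so the MGF bounds multiply along $P(k)$, and a sum of norms of conditionally subgaussian vectors concentrates like $\Sigma(h+\sqrt{h(d+\log(1/\delta'))})$. To make this precise, bound $\|Z_u\|$ through an $\varepsilon$-net supremum, or use $\mathbb E\exp(\lambda\|Z_u\|)\le 2\cdot5^{d+1}e^{2\lambda^2\Sigma^2}$ for a $1/2$-net, and then chain these conditional MGFs along the path. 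Choosing $\lambda\approx\sqrt{(d+\log(T/\delta))/h}/\Sigma$ yields, after a union bound over the at most $T+1$ boundary paths, the bound $C\Sigma(\log T+\sqrt{\log T\,(d+\log(1/\delta))})$. The term $\log(T)$ inside the square root is absorbed into the first summand, since $\sqrt{\log T\cdot\log T}=\log T$.
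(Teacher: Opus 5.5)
Your argument for part (i) is essentially the paper's: the same boundary-path identity with coefficients in $[0,1]$, the same override counting, and the same conditioning on ancestor tapes before the union bound. One bookkeeping fix is needed. The corrected signed count equals $\1\{m_u\text{ odd}\}$, and all overrides move the count in the same direction, so the number $r$ of overrides satisfies $2r=|q-\1\{m_u\text{ odd}\}|$ exactly, with $q=\sum_i\eps_i$. Your weaker bound $2r\le|q|+1$ leaves an extra $1/2$ at even nodes. The exact identity gives $\frac{V+W}{2}+\1\{m_u\text{ odd}\}$ directly.

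Part (ii) has a genuine gap. Once you pass to $\|e_u\|\lesssim\|Z_u\|+1$ and $\|\Res_k\|\lesssim h+\sum_{u\in P(k)}\|Z_u\|$, you have already incurred the $\sqrt h$ loss you say you want to avoid. Your claim that a sum of norms of conditionally subgaussian vectors concentrates like $\Sigma(h+\sqrt{h(d+\log(1/\delta'))})$ is false. Independent $N(0,\Sigma^2 I_{d+1})$ vectors satisfy the moment bound, yet the sum of $h$ of their norms is about $h\Sigma\sqrt{d+1}$.

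Your own computation confirms this. Chaining $\mathbb E e^{\lambda\|Z_u\|}\le 2\cdot 5^{d+1}e^{2\lambda^2\Sigma^2}$ along the path pays the net factor once per node, for a total of $(2\cdot5^{d+1})^h$. Even the best choice of $\lambda$ in Markov's inequality then yields only $O(\Sigma(h\sqrt d+\sqrt{hs}))$, that is, $\log T\sqrt d$ rather than $\sqrt{d\log T}$. Your proposed $\lambda$ does worse still.

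The missing idea is to keep the vector part linear until the very end. Write $e_u=\frac12 Y_u+R_u$, where $Y_u=\sum_i\eps_i(x_i-\mu_u)$ and $\mu_u$ is the mean of the vectors reaching $u$. Then $Y_u$ is linear in the proposed signs, while $\|R_u\|\le|q_u|+1$ depends only on the scalar count imbalance. The path identity gives $\Res_k=\frac12\sum_u\alpha_{u,k}Y_u+\sum_u\alpha_{u,k}R_u$.

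For a fixed unit vector $z$, apply the subgaussian hypothesis to $(x_i,1)/\sqrt2$ with a suitable $\theta$. This gives $\mathbb E[e^{\lambda\ip{z}{Y_u}/2}\mid\anc(u)]\le e^{\Sigma^2\lambda^2/2}$. Peeling off nodes from the deepest upward then shows that $\ip{z}{\frac12\sum_u\alpha_{u,k}Y_u}$ has variance proxy at most $h\Sigma^2$. A single $1/2$-net of the sphere, paid for once rather than $h$ times, gives $O(\Sigma\sqrt{h(d+s)})$.

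Only the corrections are summed in absolute value. Because they are controlled by scalars, the same peeling applied to $|q_u|$ costs $O(\Sigma\sqrt{h(h+s)})$, which is the source of the $\log T$ term.
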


Part~\textup{(i)} will give the deterministic Euclidean and sparse-vector
bounds. Part~\textup{(ii)} will give the randomized Euclidean bound without
paying a separate norm bound at every node. The sharper dimension-dependent
bound follows by using deterministic routing near the leaves.

\subsection{From local splits to output prefixes}

We prove \cref{thm:blackbox-main} by first relating the output prefixes
to the errors of the local splits. The identity in \cref{lem:one-path}
shows that each prefix depends only on its boundary path;
\cref{lem:quota-repair} then bounds each split error in terms of the
signer's proposed signs and the corrections needed to enforce
capacities. Combining these two facts with the triangle inequality
will prove part~\textup{(i)}. For part~\textup{(ii)}, we will use the
conditional subgaussian estimates to bound the sum along the path.

Before continuing, we define quantities only needed in the analysis. Namely, we let $\anc(u)$ be the set of strict ancestors of $u$ in the tree and, throughout, conditioning on $\anc(u)$ means conditioning on the
complete random tapes of all nodes in $\anc(u)$. Furthermore, for the completed ordering,
\begin{equation}
    \tot_u:=\sum_{j\in I_u}v_{\pi(j)},
    \qquad
    \err_u:=\tot_{u_L}-\frac{m_{u_L}}{m_u}\tot_u.
    \label{eq:local-error}
\end{equation}
Thus $\err_u$ is the left child's error relative to its proportional
share of the parent total, and the right child's error is $-\err_u$. To track the same discrepancy within an interval,
for $0\le s\le m_u$ we set
\[
    F_u(s):=\sum_{j=\min I_u}^{\min I_u+s-1}v_{\pi(j)}
    -\frac{s}{m_u}\tot_u.
\]
This is the deviation of the first $s$ positions in $I_u$ from their
proportional share of $\tot_u$, so $F_{\rm root}(k)=\Res_k(v_\pi)$ and
$F_u(0)=F_u(m_u)=0$. For $u\in P(k)$, we also set
\[
    \alpha_{u,k}:=\min\left\{
       \frac{|I_u\cap[k]|}{m_{u_L}},\,
       \frac{|I_u\setminus[k]|}{m_{u_R}}
    \right\}\in[0,1].
\]
When the prefix ends in the left child, this coefficient is the
fraction of that child included in the prefix; when it ends in the
right child, it is the fraction of that child excluded from the
prefix. At the boundary between the children, it equals one. The
following identity shows that these fractions are exactly the weights
with which the local split errors contribute to the prefix residual.

\begin{lemma}[one-boundary-path identity]
\label{lem:one-path}
For every output prefix $k$, the boundary path and coefficients above
satisfy
\begin{equation}
 \Res_k(v_\pi)=\sum_{u\in P(k)}\alpha_{u,k}\err_u.
 \label{eq:one-path}
\end{equation}
\end{lemma}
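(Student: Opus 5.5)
The identity follows by telescoping a one-step recursion for $F_u$ down the boundary path. The recursion says that, for an internal node $u$ and $0\le s\le m_u$, the within-interval deviation $F_u(s)$ equals a multiple of $\err_u$ plus the deviation inside whichever child contains the cut. Concretely, with $a:=m_{u_L}$, we aim to show
\[
F_u(s)=
\begin{cases}
\frac{s}{a}\err_u+F_{u_L}(s), & 0\le s\le a,\\[1mm]
\frac{m_u-s}{m_u-a}\err_u+F_{u_R}(s-a), & a\le s\le m_u.
\end{cases}
\]
For $s\le a$, the first $s$ positions of $I_u$ are the first $s$ positions of $I_{u_L}$. Hence
\[
F_u(s)=F_{u_L}(s)+\tfrac{s}{a}\tot_{u_L}-\tfrac{s}{m_u}\tot_u=F_{u_L}(s)+\tfrac{s}{a}\err_u,
\]
where the last step uses $\tot_{u_L}=\err_u+\tfrac{a}{m_u}\tot_u$.

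For $s\ge a$, we write the prefix as all of $I_{u_L}$ followed by the first $s-a$ positions of $I_{u_R}$. We then use $\tot_{u_R}=\tot_u-\tot_{u_L}=\tfrac{m_u-a}{m_u}\tot_u-\err_u$. Collecting the $\tot_u$ terms, the coefficient is $\tfrac{a}{m_u}+\tfrac{s-a}{m_u}-\tfrac{s}{m_u}=0$. What remains is
\[
\err_u-\tfrac{s-a}{m_u-a}\err_u=\tfrac{m_u-s}{m_u-a}\err_u.
\]
At $s=a$ both branches agree, since $F_{u_L}(a)=F_{u_R}(0)=0$.

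Next I match these coefficients with $\alpha_{u,k}$. For $u\in P(k)$ with local cut $s=|I_u\cap[k]|$, note that $|I_u\setminus[k]|=m_u-s$ and $m_{u_R}=m_u-a$. The coefficient $\tfrac{s}{a}\le1$ occurs exactly when $s\le a$. In that case $\tfrac{m_u-s}{m_u-a}\ge1$, so the minimum defining $\alpha_{u,k}$ is $\tfrac{s}{a}$, and symmetrically when $s\ge a$. So the coefficient in the recursion is $\alpha_{u,k}$.

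Finally I iterate from the root, where $F_{\rm root}(k)=\Res_k(v_\pi)$. At each step the child carrying the remainder term has local cut equal to $|I_c\cap[k]|$. If this cut is $0$ or $m_c$, the term vanishes and the path stops. Otherwise $I_c$ meets both $[k]$ and its complement, so $c\in P(k)$ (if $c$ were a leaf its interval would have size one, making that impossible). Thus the nodes visited before termination are exactly the nodes of $P(k)$ in order of depth: each node of $P(k)$ is reached because its ancestors are in $P(k)$, and only one child per level can be cut. For $k=0$ or $k=T$, both sides are zero.

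The main obstacle is bookkeeping rather than any real difficulty. One must check that the recursion visits precisely $P(k)$. One must also handle the edge case $m_{u_R}=0$, which cannot occur for internal nodes since $m_u\ge2$, and cuts at the child boundary, where $\alpha_{u,k}=1$ and the remainder vanishes.
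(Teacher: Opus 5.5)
Your proposal is correct and is essentially the paper's proof. It uses the same two-case recursion for $F_u$ (your coefficient $(m_u-s)/(m_u-a)$ is the paper's $1-t/m_{u_R}$ with $t=s-m_{u_L}$), matches that coefficient with the minimum defining $\alpha_{u,k}$, and descends from the root along $P(k)$ until the remaining $F$-term vanishes, with your extra bookkeeping (visited nodes are exactly $P(k)$, and $m_{u_R}\ge1$ at internal nodes) accurate and slightly more explicit than the paper's.
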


\begin{proof}
The cases $k=0,T$ are immediate. At a node $u\in P(k)$, write
$s=|I_u\cap[k]|$. If $s\le m_{u_L}$, the local prefix lies entirely in
the left child, and adding and subtracting
$(s/m_{u_L})\tot_{u_L}$ gives
\begin{align*}
    F_u(s)
    =F_{u_L}(s)+\frac{s}{m_{u_L}}\tot_{u_L}
      -\frac{s}{m_u}\tot_u =F_{u_L}(s)+\frac{s}{m_{u_L}}\err_u.
\end{align*}
Here $\alpha_{u,k}=s/m_{u_L}$, since
$(m_u-s)/m_{u_R}\ge1$. If instead $s=m_{u_L}+t$ with
$0\le t\le m_{u_R}$, the prefix contains the whole left child and the
first $t$ positions of the right child, so
\begin{align*}
    F_u(m_{u_L}+t)
    &=F_{u_R}(t)+\tot_{u_L}+\frac{t}{m_{u_R}}\tot_{u_R}
      -\frac{m_{u_L}+t}{m_u}\tot_u\\
    &=F_{u_R}(t)+\err_u
      +\frac{t}{m_{u_R}}
       \left(\tot_{u_R}-\frac{m_{u_R}}{m_u}\tot_u\right) =F_{u_R}(t)+\left(1-\frac{t}{m_{u_R}}\right)\err_u.
\end{align*}
In this case $\alpha_{u,k}=1-t/m_{u_R}$, since
$(m_{u_L}+t)/m_{u_L}\ge1$.
Starting with $F_{\rm root}(k)=\Res_k(v_\pi)$, each substitution
therefore adds the term $\alpha_{u,k}\err_u$ and leaves only the
residual in the child crossed by the prefix boundary. More explicitly,
when the recursion reaches a node $u$, it has established
\[
 \Res_k(v_\pi)
 =\sum_{\substack{v\in P(k)\\v\text{ a strict ancestor of }u}}
    \alpha_{v,k}\err_v
   +F_u\bigl(|I_u\cap[k]|\bigr).
\]
The formulas above preserve this identity as we descend to the unique
child that meets both the prefix and its complement. If the prefix
ends between the two children, then $s=m_{u_L}$ and the remaining
residual is $F_{u_L}(m_{u_L})=0$; a residual at a leaf is also zero.
Thus the recursion visits exactly the nodes of $P(k)$ and leaves no
residual term, proving \eqref{eq:one-path}.
\end{proof}

The path identity reduces the ordering problem to controlling
$\err_u$ at each node. The signer supplies a proposed split, whereas
$\err_u$ comes from the split after capacities have been enforced.
The next lemma connects the two: it expresses $\err_u$ as a centered
signed sum plus a correction controlled by the signer's count
imbalance. It also records the conditional estimates needed later to
combine these errors along a path.

\begin{lemma}[one-node quota repair]
\label{lem:quota-repair}
Let $u$ be an internal node of capacity $m$, receiving Euclidean
unit-ball vectors $x_1,\ldots,x_m$, and write $\mu=m^{-1}\sum_i x_i$.
Let $\eps_i$ be the proposed signs and $\hat \varepsilon_i$ the quota-corrected
routing signs, where $+1$ means left, and put $q=\sum_i\eps_i, Y=\sum_i\eps_i(x_i-\mu)$ for succinctness. Then the local proportional error of \eqref{eq:local-error} has a
decomposition
\[
    \err_u=\frac12Y+R,
    \qquad \norm{R}\le |q|+1.
\]
Under the subgaussian hypothesis of \cref{thm:blackbox-main},
for a fixed input, conditionally on the routing tapes at the strict
ancestors of $u$, for every unit vector $z\in\R^d$ and every
$\lambda\in\R$,
\begin{align}
 \mathbb E\!\left[e^{\lambda\ip{z}{Y}/2}\mid\text{strict ancestors}\right]
 &\le e^{\Sigma^2\lambda^2/2},
 \label{eq:y-mgf}\\
 \mathbb E\!\left[e^{\lambda q}\mid\text{strict ancestors}\right]
 &\le e^{\Sigma^2\lambda^2}.
 \label{eq:q-mgf}
\end{align}
\end{lemma}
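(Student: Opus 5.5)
The plan is to prove the deterministic decomposition by an exact algebraic identity for $\err_u$ in terms of the routing signs $\widehat\eps_i$. I then bound the discrepancy between $\widehat\eps$ and $\eps$ by counting overrides. Finally, I derive both moment bounds from \eqref{eq:terminal-mgf} applied to the augmented stream $z_i=(x_i,1)/\sqrt2$.

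\textbf{Step 1: exact identity.} Exactly $m_{u_L}=\lceil m/2\rceil$ of the $x_i$ are routed left, so $\sum_i\widehat\eps_i=m_{u_L}-m_{u_R}=2m_{u_L}-m$. Then
\[
\tot_{u_L}=\sum_i\tfrac{1+\widehat\eps_i}{2}x_i,\qquad \tot_u=m\mu,
\]
and a short expansion gives
\[
\err_u=\tfrac12\sum_i\widehat\eps_i(x_i-\mu)+\Bigl(\tfrac12(2m_{u_L}-m)+\tfrac m2-m_{u_L}\Bigr)\mu=\tfrac12\sum_i\widehat\eps_i(x_i-\mu).
\]
Setting $R:=\tfrac12\sum_i(\widehat\eps_i-\eps_i)(x_i-\mu)$ yields $\err_u=\tfrac12Y+R$.

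\textbf{Step 2: counting overrides.} Overrides occur only after one child is full, and from then on every routing sign points to the other child. Before that moment the proposed signs are followed exactly.
\begin{itemize}
\item Suppose the left child fills first. The flips are then the proposed $+1$'s after filling, and their number is $\frac{m+q}{2}-\lceil m/2\rceil\le |q|/2$.
\item Symmetrically, if the right child fills first, the number of flips is $\frac{m-q}{2}-\lfloor m/2\rfloor\le(|q|+1)/2$.
\end{itemize}
Each flip contributes $|\widehat\eps_i-\eps_i|=2$, and $\norm{x_i-\mu}\le2$. Hence $\norm{R}\le\tfrac12\cdot 2\cdot 2\cdot\#\text{flips}\le |q|+1$.

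\textbf{Step 3: moment bounds.} The key observation is that, for a fixed oblivious input, the stream $x_1,\ldots,x_m$ reaching $u$ is a function of the input and the tapes in $\anc(u)$ only. The same holds for the routing statistics passed to $\mathsf{Sign}_u$: the counts and sums of the children of $u$ are determined by the arriving stream and by the fixed overriding rule, because the signer sees every vector regardless of overrides. Therefore, conditionally on $\anc(u)$, the local stream is fixed, and $u$'s tape is independent of it. I expect making this conditioning rigorous to be the main (if mild) obstacle. In particular, one must note that the signer's input at time $i$ may depend on its own earlier proposals only through quantities it is allowed to see, so the fixed-stream hypothesis of \eqref{eq:terminal-mgf} still applies, interpreting the signer as an adaptive-input rule on the augmented vectors.

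Once the stream is fixed, $\mu$ is a fixed vector, and
\[
\lambda\ip{z}{Y}/2=\ip{\theta}{\textstyle\sum_i\eps_i z_i},\qquad \theta=\tfrac{\lambda}{\sqrt2}\bigl(z,-\ip{z}{\mu}\bigr).
\]
Since $\norm{\mu}\le1$, we have $\norm{\theta}^2=\tfrac{\lambda^2}{2}(1+\ip{z}{\mu}^2)\le\lambda^2$, and \eqref{eq:terminal-mgf} gives \eqref{eq:y-mgf}. Similarly, $\lambda q=\ip{\theta'}{\sum_i\eps_i z_i}$ with $\theta'=\sqrt2\lambda e_{d+1}$. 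Here $\norm{\theta'}^2=2\lambda^2$, which gives \eqref{eq:q-mgf}.
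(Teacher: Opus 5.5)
Your proposal is correct and follows the paper's proof essentially step for step. It uses the same exact identity $\err_u=\frac12\sum_i\widehat\eps_i(x_i-\mu)$, the same override count $\frac12|q-\1\{m\text{ odd}\}|$ (you obtain it by a case split on which child fills first, where the paper compares signed counts), and the same choices $\theta=\frac{\lambda}{\sqrt2}(z,-\ip{z}{\mu})$ and $\theta'=\sqrt2\lambda e_{d+1}$ in \eqref{eq:terminal-mgf} after conditioning on the strict-ancestor tapes. Your concern about routing statistics does not arise in the paper's setting, since the subgaussian signer receives only $(x_i,1)/\sqrt2$; in any case, those statistics are functions of the fixed local stream and the signer's own past proposals, not of the stream alone as your first phrasing suggests.
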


\begin{proof}
We first count the changes made by quota repair. Since the children
receive exactly their prescribed numbers of vectors, the corrected
signs satisfy $\sum_i\hat \varepsilon_i=\1\{m\text{ odd}\}$. Once a proposal is overridden, one
child is full, so every later override changes a proposal in the same
direction. Each change alters the signed count by two, and comparing
the proposed count $q$ with the corrected count $\1\{m\text{ odd}\}$ therefore gives
\begin{equation}
 \text{number of overrides}=\frac12|q-\1\{m\text{ odd}\}|.
 \label{eq:override-count}
\end{equation}
\cref{fig:deterministic-quota-repair} illustrates this identity for even $m$ and $q=2$.
To translate this into a bound on the split error, we simply use 
$m_{u_L}/m=(1+\1\{m\text{ odd}\}/m)/2$ and $\tot_u=m\mu$ to write
\[
 \err_u
 =\frac12\left(\sum_i\hat \varepsilon_i x_i-\1\{m\text{ odd}\}\mu\right)
 =\frac12\sum_i\hat \varepsilon_i(x_i-\mu)
 =\frac12Y+\underbrace{\frac12\sum_i(\hat \varepsilon_i-\eps_i)(x_i-\mu)}_{\coloneqq R},
\]
Each overridden sign contributes one vector
$\pm(x_i-\mu)$, whose norm is at most two because both $x_i$ and
$\mu$ lie in the unit ball. Together with \eqref{eq:override-count},
this yields $\norm{R}\le |q-\1\{m\text{ odd}\}|\le |q|+1$.

We now turn to the conditional estimates, and fix the complete tapes of the strict
ancestors of $u$. The original input is fixed, and ancestor routing
never consults descendant signers, so these tapes determine the
entire stream reaching $u$, while its private tape remains independent
with its original distribution. We can therefore apply
\eqref{eq:terminal-mgf} to the augmented stream
$(x_i,1)/\sqrt2$, whose terminal signed sum is
$(\sum_i\eps_i x_i,q)/\sqrt2$.
Taking $\theta=(\lambda/\sqrt2)(z,-\ip{z}{\mu})$ gives
\[
 \left\langle
   \frac{\lambda}{\sqrt2}(z,-\ip{z}{\mu}),\,
   \frac1{\sqrt2}\left(\sum_i\eps_i x_i,q\right)
 \right\rangle
 =\frac{\lambda}{2}\ip{z}{Y},
 \qquad
 \norm{\theta}^2
 =\frac{\lambda^2}{2}\left(1+\ip{z}{\mu}^2\right)
 \le\lambda^2,
\]
which substituted in \eqref{eq:terminal-mgf} proves \eqref{eq:y-mgf}.
Similarly, we may choose $\theta=(0,\ldots,0,\sqrt2\lambda)$, which has squared norm
$2\lambda^2$ and inner product $\lambda q$ with the augmented signed
sum, giving \eqref{eq:q-mgf}.
\end{proof}

We thus turn to the proof of the terminal vector and count bounds in \Cref{thm:blackbox-main}.

\begin{figure}[tbp]
\centering
\begin{tikzpicture}[
    x=0.57cm,y=0.63cm,
    every node/.style={font=\scriptsize}
]
\node[anchor=east] at (-0.2,1.6) {arrival};
\node[anchor=east] at (-0.2,0.9) {proposed sign};
\node[anchor=east] at (-0.2,0) {routing sign};

\foreach \j/\sgn in {1/+,2/+,3/-,4/+,5/-,6/+,7/-,8/+}{
    \node at (\j-0.5,1.6) {$\j$};
    \node at (\j-0.5,0.9) {$\sgn$};
}

\fill[black!12] (7,-0.32) rectangle (8,0.32);
\foreach \j/\sgn in {1/+,2/+,3/-,4/+,5/-,6/+,7/-,8/-}
    \node at (\j-0.5,0) {$\sgn$};

\draw[->] (7.5,0.63)--(7.5,0.35);
\draw[dashed] (6,1.22)--(6,-0.37);

\node[anchor=west] at (8.3,0.9) {five left, three right};
\node[anchor=west] at (8.3,0) {four left, four right};
\node[align=center] at (4,-1.0)
    {Each child has capacity four.\\
     The left child fills after arrival $6$;
     proposal $8$ is overridden.};
\end{tikzpicture}
\caption{Enforcing capacities without changing the signer's
internal decisions. The proposed signed count is $2$, the required
count is $0$, and exactly one sign is overridden.}
\label{fig:deterministic-quota-repair}
\end{figure}

\begin{proof}[Proof of \cref{thm:blackbox-main} (i)]
First, note that the same quota calculation applies in any norm. Moreover, with the notation of
\cref{lem:quota-repair}, each override changes the signed vector sum
by a vector of norm at most two. Hence \eqref{eq:override-count} and
the two terminal bounds assumed in part~\textup{(i)} of the statement give
\[
 \left\lVert\sum_i\hat \varepsilon_i x_i\right\rVert
 \le \left\lVert\sum_i\eps_i x_i\right\rVert+|q-\1\{m\text{ odd}\}|
 \le V(m)+W(m)+\1\{m\text{ odd}\}.
\]
We now have
\begin{equation}
 \lVert\err_u\rVert = \left\lVert \frac12 \left(\sum_i\hat \varepsilon_i x_i-\frac{\1\{m\text{ odd}\}}{m}\tot_u\right)\right\rVert
 \le\frac{V(m)+W(m)}2+\1\{m\text{ odd}\},
 \label{eq:norm-tail-local}
\end{equation}
where the equality holds by definition of the exact proportional error $\err_u$ and the inequality by the fact that $\lVert\tot_u/m\rVert\le1$. Applying the triangle inequality to \cref{lem:one-path} now gives
\[
 \lVert\Res_k(v_\pi)\rVert
 \le\sum_{u\in P(k)}\alpha_{u,k}\lVert\err_u\rVert
 \le\sum_{u\in P(k)}
 \left(\frac{V(m_u)+W(m_u)}2+\1\{m_u\text{ odd}\}\right),
\]
which is \eqref{eq:reduction-norm}. This argument holds for every
realized input and collection of proposed signs, so a deterministic
signer satisfying the terminal bounds on every stream also gives the
claimed guarantee against adaptive inputs.

For the probabilistic statement, fix an oblivious input and say that
a node fails if either of its two terminal bounds is violated.
Conditioning on its strict-ancestor tapes fixes its entire input
stream while leaving its own tape independent, just as in
\cref{lem:quota-repair}. The assumed fixed-stream guarantee thus bounds
its conditional failure probability by $p(m_u)$. Taking expectations
and then a union bound gives
\begin{align*}
 \Pb\!\left(\exists \text{ node }u\text{ fails}\right)
 &\le\sum_{u\text{ internal}}\Pb\!\left(\text{node }u\text{ fails}\right) =\sum_{u\text{ internal}}\mathbb E\!\left[
   \Pb\!\left(\text{node }u\text{ fails}
     \,\middle|\,\anc(u)\right)\right] \le\sum_{u\text{ internal}}p(m_u).
\end{align*}
Outside this failure event, all local bounds hold and the pathwise
argument above applies simultaneously to every prefix, which gives the first part of the theorem.
\end{proof}

\subsection{Combining randomized errors}

For part~\textup{(ii)} of \cref{thm:blackbox-main}, we use the
decomposition in \cref{lem:quota-repair} to combine the centered
signed sums before taking their norm, and then bound the quota
corrections through the count imbalances. The errors along a path
need not be independent, because ancestor decisions determine the
streams reaching their descendants. The conditional estimates in
\eqref{eq:y-mgf} and \eqref{eq:q-mgf} let us handle this dependence by conditioning
successively from the deepest node to the root.

\begin{lemma}[conditional composition]
\label{lem:path-composition}
Assume the subgaussian condition of \cref{thm:blackbox-main}, and
write $Y_u,q_u,R_u$ for the quantities in \cref{lem:quota-repair} at
node $u$. There is an absolute constant $C$ such that, for every
prefix $k$ and every $s>0$,
\begin{align}
 \Pb\!\left(
 \left\|\frac12\sum_{u\in P(k)}\alpha_{u,k}Y_u\right\|_2
 >C\Sigma\sqrt{h(d+s)}\right)
 &\le e^{-s},
 \label{eq:path-vector-tail}\\
 \Pb\!\left(
 \sum_{u\in P(k)}\alpha_{u,k}|q_u|
 >C\Sigma\sqrt{h(h+s)}\right)
 &\le e^{-s}.
 \label{eq:path-count-tail}
\end{align}
\end{lemma}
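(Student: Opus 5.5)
The plan is to bound the joint moment generating function of the path sums by conditioning successively from the deepest node up to the root. Then I convert it to tails: by a net argument for the vector sum, and by a direct Chernoff bound for the count sum.

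Fix $k$ and write $P(k)=(u_1,\ldots,u_r)$ by increasing depth, with $r\le h$. The path $P(k)$ and the weights $\alpha_{u,k}$ depend only on $k$ and the tree, not on the randomness. That is the key point.

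\emph{Vector part.} Fix a unit vector $z\in\R^d$ and $\lambda\in\R$. Let $S=\frac12\sum_j\alpha_j\ip{z}{Y_{u_j}}$ with $\alpha_j=\alpha_{u_j,k}\in[0,1]$. Every $Y_{u_j}$ with $j<r$ is measurable with respect to the tapes of $\anc(u_r)$. Indeed, the stream reaching $u_j$ is fixed by its ancestors' tapes, and $Y_{u_j}$ is then fixed by $u_j$'s own tape; all of these are strict ancestors of $u_r$. So conditioning on $\anc(u_r)$ and applying \eqref{eq:y-mgf} with $\lambda\alpha_r$ gives
\[
\mathbb E\bigl[e^{\lambda S}\mid\anc(u_r)\bigr]\le e^{\Sigma^2\lambda^2\alpha_r^2/2}\,e^{\lambda\frac12\sum_{j<r}\alpha_j\ip{z}{Y_{u_j}}}.
\]
Iterating upward through $u_{r-1},\ldots,u_1$ by the tower property gives $\mathbb E e^{\lambda S}\le e^{\Sigma^2\lambda^2 h/2}$. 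Hence the vector $W=\frac12\sum_j\alpha_jY_{u_j}$ is $\Sigma\sqrt h$-subgaussian in every direction. The standard $1/2$-net argument on $S^{d-1}$ finishes this part. A net of size $5^d$ together with $\norm{W}\le2\max_{z\in\mathcal N}\ip{z}{W}$ and a union bound give $\Pb(\norm{W}>2\Sigma\sqrt{2h(d\log5+s)})\le e^{-s}$. This yields \eqref{eq:path-vector-tail}.

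\emph{Count part.} The same peeling with \eqref{eq:q-mgf} for $\lambda\ge0$ needs care, because $|q_u|$ is not linear. I use $e^{\lambda|q|}\le e^{\lambda q}+e^{-\lambda q}$, so each node's conditional factor is at most $2e^{\Sigma^2\lambda^2\alpha^2}$. Iterating gives
\[
\mathbb E\exp\Bigl(\lambda\sum_j\alpha_j|q_{u_j}|\Bigr)\le 2^h e^{\Sigma^2\lambda^2h}.
\]
Markov's inequality then gives $\Pb(\sum\alpha_j|q_{u_j}|>x)\le\exp(h\log2+\Sigma^2\lambda^2h-\lambda x)$. Choosing $\lambda=x/(2\Sigma^2h)$ makes the exponent $h\log2-x^2/(4\Sigma^2h)$. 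Taking $x=2\Sigma\sqrt{h(h\log2+s)}$ gives tail $e^{-s}$, which is \eqref{eq:path-count-tail}.

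The main obstacle is justifying the measurability step: the earlier path quantities must be determined by $\anc(u_r)$, while the conditional bound \eqref{eq:y-mgf} still applies to $u_r$. This uses that ancestor routing never consults descendant signers, which holds because overrides depend only on counts at the node itself. It also needs the $2^h$ factor in the count part to be absorbed into the $h$ inside $\sqrt{h(h+s)}$, which is exactly why that term appears. No step uses $\Sigma\ge1$ except to absorb constants.
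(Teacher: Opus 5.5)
Your proposal is correct and follows essentially the same route as the paper. The path and weights are deterministic, the conditional moment-generating-function factors are peeled off from the deepest node upward, a $1/2$-net union bound handles the vector tail, and $e^{\lambda|q|}\le e^{\lambda q}+e^{-\lambda q}$ gives a factor $2^{|P(k)|}$ that is absorbed into $\sqrt{h(h+s)}$ for the count tail. The only differences are cosmetic: you use a net of size $5^d$ instead of $e^{2d}$, and you keep $h\log 2$ where the paper bounds $2^h\le e^h$, so your $\lambda$ is the paper's with $h$ replaced by $h\log 2$.
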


\begin{figure}[tbp]
\centering
\begin{tikzpicture}[
    x=0.70cm,y=0.70cm,
    every node/.style={font=\small},
    plain/.style={circle,draw=black!35,fill=white,
                  minimum size=3.8mm,inner sep=0pt},
    ancestor/.style={circle,draw=blue!65!black,fill=blue!10,
                     line width=0.8pt,minimum size=5mm,inner sep=1pt},
    deepest/.style={circle,draw=orange!70!black,fill=orange!15,
                    line width=0.8pt,minimum size=5mm,inner sep=1pt}
]
\coordinate (root) at (4,3.0);
\coordinate (left) at (2,2.0);
\coordinate (right) at (6,2.0);
\coordinate (ll) at (1,1.0);
\coordinate (lr) at (3,1.0);
\coordinate (rl) at (5,1.0);
\coordinate (rr) at (7,1.0);
\foreach \a/\b in {root/left,root/right,left/ll,left/lr,right/rl,right/rr}
    \draw[black!30] (\a)--(\b);
\foreach \a/\b/\c in {ll/0.5/1.5,lr/2.5/3.5,rl/4.5/5.5,rr/6.5/7.5}{
    \draw[black!30] (\a)--(\b,0.1);
    \draw[black!30] (\a)--(\c,0.1);
}
\draw[blue!65!black,line width=1.2pt] (root)--(right)--(rl);
\foreach \a in {left,ll,lr,rr}
    \node[plain] at (\a) {};
\node[ancestor] at (root) {};
\node[anchor=south] at (4,3.23) {root};
\node[ancestor] at (right) {$v$};
\node[deepest] at (rl) {$u$};
\node[blue!65!black,anchor=west] at (6.4,2.8) {$P(k)$};
\foreach \j in {6,7,8}
    \node[draw=black!25,fill=white,rounded corners=0.7pt,
          minimum width=3.8mm,minimum height=3.4mm,inner sep=0pt,
          font=\scriptsize] at (\j-0.5,0.1) {$\j$};
\foreach \j in {1,...,5}
    \node[draw=blue!30,fill=blue!7,rounded corners=0.7pt,
          minimum width=3.8mm,minimum height=3.4mm,inner sep=0pt,
          font=\scriptsize] at (\j-0.5,0.1) {$\j$};
\draw[dashed,black!55] (5,-0.35)--(5,0.6);
\draw[black!55] (0.2,-0.26)--(0.2,-0.38)--(4.8,-0.38)--(4.8,-0.26);
\node[anchor=north,font=\scriptsize] at (2.5,-0.4) {$[k]$, with $k=5$};
\node[ancestor,minimum size=3.5mm] at (9.1,2.5) {};
\node[anchor=west,align=left,font=\scriptsize,text width=3.6cm]
    at (9.5,2.5) {Ancestor tapes fixed\\when conditioning at $u$};
\node[deepest,minimum size=3.5mm] at (9.1,1.1) {};
\node[anchor=west,align=left,font=\scriptsize,text width=3.6cm]
    at (9.5,1.1) {Average over the tape at $u$,\\then move to its parent};
\end{tikzpicture}
\caption{Conditioning from the deepest node upward, illustrated for
$T=8$ and $k=5$. The highlighted path $P(k)$ contains the root,
$v$, and $u$. When conditioning on the strict-ancestor tapes of $u$,
the summands at the root and $v$ are fixed, so only the exponential
factor at $u$ needs to be averaged. After removing that factor, the
same argument applies at $v$, and then at the root.}
\label{fig:path-conditioning}
\end{figure}

\begin{proof}
The statement is immediate when $P(k)$ is empty, so assume otherwise.
We first reduce the vector bound to a scalar tail estimate by choosing a
$1/2$-net of the Euclidean unit sphere with at most $e^{2d}$ points.
For every nonzero vector $x$, a net point within distance $1/2$ of
$x/\|x\|_2$ has inner product at least $\|x\|_2/2$ with $x$.
Consequently, a union bound over the net proves
\eqref{eq:path-vector-tail}, with $C=4$, as soon as we show that every
fixed unit vector $z$ satisfies
\[
 \Pb\!\left(
   \ip{z}{\frac12\sum_{v\in P(k)}\alpha_{v,k}Y_v}
   >2\Sigma\sqrt{h(d+s)}
 \right)\le e^{-2d-s}.
\]
Indeed, a norm exceeding $4\Sigma\sqrt{h(d+s)}$ forces at least one
net direction to exceed the displayed threshold, and
$e^{2d}e^{-2d-s}=e^{-s}$. It thus remains to prove this directional
estimate.

To this end, let us fix $z$ and $\lambda>0$, and note that the path and its coefficients depend only
on the position tree and $k$, so they are deterministic. Let $u$ be
the deepest node of $P(k)$ so that every other node
$v\in P(k)\setminus\{u\}$ is a strict ancestor of $u$ (see
\cref{fig:path-conditioning}). Since $Y_v$ depends only on the tapes
at $v$ and its ancestors, conditioning on all strict-ancestor tapes
of $u$ fixes every term except the one at $u$. Its own tape remains
independent, and \eqref{eq:y-mgf}, applied with
$\lambda\alpha_{u,k}$, bounds its conditional exponential moment.
Pulling out the fixed terms and then taking expectations gives
\begin{align*}
 \mathbb E\exp\!\left(
   \frac{\lambda}{2}\sum_{v\in P(k)}\alpha_{v,k}\ip{z}{Y_v}
 \right) &=\mathbb E\!\left[
   \exp\!\left(
    \frac{\lambda}{2}\sum_{v\in P(k)\setminus\{u\}}
       \alpha_{v,k}\ip{z}{Y_v}\right)
   \mathbb E\!\left[
     e^{\lambda\alpha_{u,k}\ip{z}{Y_u}/2}
     \,\middle|\,\anc(u)
   \right]\right]\\
 &\le e^{\Sigma^2\lambda^2\alpha_{u,k}^2/2}
   \mathbb E\exp\!\left(
    \frac{\lambda}{2}\sum_{v\in P(k)\setminus\{u\}}
       \alpha_{v,k}\ip{z}{Y_v}\right).
\end{align*}
Thus averaging over the tape at $u$ removes its term from the sum,
at the cost of the deterministic factor
$e^{\Sigma^2\lambda^2\alpha_{u,k}^2/2}$. If any terms remain, the
parent of $u$ is now the deepest node, and the same calculation
applies with that parent in place of $u$. Continuing upward to the
root removes one term at a time and multiplies these factors, giving
\[
 \mathbb E\exp\!\left(
   \frac{\lambda}{2}\sum_{v\in P(k)}\alpha_{v,k}\ip{z}{Y_v}
 \right)
 \le\exp\!\left(
   \frac{\Sigma^2\lambda^2}{2}
   \sum_{v\in P(k)}\alpha_{v,k}^2\right)
 \le\exp\!\left(\frac{h\Sigma^2\lambda^2}{2}\right).
\]
The last inequality uses $0\le\alpha_{v,k}\le1$ and $|P(k)|\le h$.
Chernoff's bound with $\lambda=2\sqrt{d+s}/(\Sigma\sqrt h)$ now yields
\[
 \Pb\!\left(
   \ip{z}{\frac12\sum_{v\in P(k)}\alpha_{v,k}Y_v}
   >2\Sigma\sqrt{h(d+s)}
 \right)\le e^{-2(d+s)}\le e^{-2d-s},
\]
which is precisely the directional estimate needed above.

For the count imbalances, we use the same conditioning argument,
now with \eqref{eq:q-mgf}. For $\lambda>0$, the inequality
$e^{\lambda|x|}\le e^{\lambda x}+e^{-\lambda x}$ gives
\begin{align*}
 \mathbb E\!\left[
   e^{\lambda\alpha_{u,k}|q_u|}
   \,\middle|\,\anc(u)\right] \le\mathbb E\!\left[
   e^{\lambda\alpha_{u,k}q_u}+e^{-\lambda\alpha_{u,k}q_u}
   \,\middle|\,\anc(u)\right]
 \le 2e^{\Sigma^2\lambda^2\alpha_{u,k}^2}.
\end{align*}
Removing the nodes from deepest to root as before, and using
$2^{|P(k)|}\le e^h$ and $\sum_{u\in P(k)}\alpha_{u,k}^2\le h$, we obtain $\mathbb E\exp\!\left(\lambda\sum_{u\in P(k)}\alpha_{u,k}|q_u|\right)
 \le e^{h+h\Sigma^2\lambda^2}$. Markov's inequality with $\lambda=\sqrt{h+s}/(\Sigma\sqrt h)$ therefore
gives
\begin{align*}
 \Pb\!\left(
    \sum_{u\in P(k)}\alpha_{u,k}|q_u|
    >2\Sigma\sqrt{h(h+s)}\right) \le
 \exp\!\left(h+h\Sigma^2\lambda^2
             -2\lambda\Sigma\sqrt{h(h+s)}\right)
 =e^{-s},
\end{align*}
which proves \eqref{eq:path-count-tail}.
\end{proof}

\begin{proof}[Proof of \cref{thm:blackbox-main} (ii)]
For $T=1$ all residuals vanish, so assume $T\ge2$. Fix $k$ and use
\cref{lem:quota-repair,lem:one-path} to write
\[
 \Res_k(v_\pi)=\frac12\sum_{u\in P(k)}\alpha_{u,k}Y_u
 +\sum_{u\in P(k)}\alpha_{u,k}R_u.
\]
The first term is controlled by \eqref{eq:path-vector-tail}, while
$0\le\alpha_{u,k}\le1$ and $|P(k)|\le h$ give
\[
 \left\|\sum_{u\in P(k)}\alpha_{u,k}R_u\right\|_2
 \le\sum_{u\in P(k)}\alpha_{u,k}(|q_u|+1)
 \le h+\sum_{u\in P(k)}\alpha_{u,k}|q_u|.
\]
Thus \eqref{eq:path-count-tail} controls the second term. Apply both
estimates with $s=\log(2(T+1)/\delta)$ and take a union bound over the
$T+1$ prefixes. The total failure probability is at most
$2(T+1)e^{-s}=\delta$, so with probability at least $1-\delta$,
\begin{equation}
 \Res_{\max}(v_\pi)
 \le C\Sigma\left(\sqrt{h(d+s)}+\sqrt{h(h+s)}\right)+h.
 \label{eq:blackbox-explicit}
\end{equation}
Finally, $s \in O(h+\log(1/\delta))$ and $\Sigma\ge1$, so
\eqref{eq:blackbox-explicit} is at most
$C\Sigma\bigl(h+\sqrt{h(d+\log(1/\delta))}\bigr)$ after increasing the
absolute constant, yielding \eqref{eq:reduction-subgaussian}.
\end{proof}

\subsection{Some remarks on the reduction}
\label{sec:reduction-remarks}
\label{sec:extensions}

\paragraph{Compact convex bodies.}
\label{cor:convex-overview}
Both parts of \cref{thm:blackbox-main} extend to inputs in a known
compact convex body $K$ of affine dimension $s\ge1$. Let us consider any $0 \le k \le T$. Then, any Euclidean
online router in dimension $s$ with proportional-prefix radius $\rho$
gives, after a fixed affine normalization,
\[
 \Res_k(v_\pi)\in\frac{s\rho}{2}(K-K),
\]
with the same deterministic or probabilistic guarantee and under
the same adversarial model. Neither symmetry of $K$ nor advance
knowledge of the total is required.

Indeed, John's theorem in the relative affine space of $K$ gives
a centered ellipsoid $E\subseteq\operatorname{span}(K-K)$ and $c\in K$ with $c+E\subseteq K\subseteq c+sE$ and $2E\subseteq K-K$, where the second inclusion follows by subtracting $c-E$ from $c+E$.
Fix a linear isomorphism $\Phi$ from $\operatorname{span}(K-K)$
onto $\R^s$ with $\Phi(E)=B_2^s$, and feed
$\Phi(v_t-c)/s$ to the Euclidean router. These vectors lie in the
unit ball, and translation cancels from proportional residuals:
\[
 \sum_{j=1}^k\frac{\Phi(v_{\pi(j)}-c)}s
 -\frac{k}{T}\sum_{t=1}^T\frac{\Phi(v_t-c)}s
 =\frac1s\Phi\!\left(\Res_k(v_\pi)\right).
\]
Thus, whenever all normalized residuals have norm at most $\rho$,
\[
 \Res_k(v_\pi)\in s\rho E\subseteq\frac{s\rho}{2}(K-K).
\]
This implication holds for every realized run. Since the normalization
is fixed before the stream, it preserves online decisions and the
original deterministic or probabilistic guarantee, including its
adversarial scope. If $s=0$, every proportional residual is zero.

\paragraph{Comparison with offline transference.}
\label{rem:offline-transference}
Chobanyan's transference theorem \cite{Chobanyan1994} converts a
uniform offline prefix-signing bound $\zeta$ into the same ordering
bound. Indeed, choose an ordering of a zero-sum unit-ball sequence
minimizing its maximum prefix norm $\rho$, and sign it with prefix bound
$\zeta$. Place the positively signed vectors first in their old order,
followed by the negatively signed vectors in reverse order. Each new prefix is a
positive-sign partial sum of the old order or, by the zero total,
the negative of a negative-sign partial sum. Each is half a sum or
difference of an old unsigned prefix and its signed prefix, so
minimality gives $\rho \le (\rho+\zeta)/2$ and hence $\rho \le \zeta$. This argument chooses its initial ordering using the complete input.
Our reduction uses terminal vector and count guarantees and commits
immediately to positions, incurring losses from quota repair and
accumulation along tree paths. As shown in \Cref{sec:lower}, a loss is unavoidable already on the
line: greedy signing has bound one, whereas deterministic online
ordering requires $\Omega(\log T)$ discrepancy.

%% file: sections/03_deterministic.tex
\section{Deterministic applications}
\label{sec:deterministic}

We apply \cref{alg:tree-routing} with deterministic signing rules,
first a scalar rule and then greedy vector signing. The scalar rule
gives a stronger local invariant on the line; for greedy vector signing, we use the norm-and-count interface of \cref{thm:blackbox-main} directly.

\begin{theorem}[Deterministic OSP upper bound]
\label{thm:deterministic-main}
For every $T\ge2$, it holds that
\[
    \OS_{\det}(d,T) \in
    \begin{cases}
        O(\log T),&d=1,\\[2mm]
        O(\sqrt T),&d\ge2.
    \end{cases}
\]
These upper bounds hold against adaptive adversaries.
\end{theorem}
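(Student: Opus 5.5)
Both bounds will come from \cref{alg:tree-routing} combined with part~\textup{(i)} of \cref{thm:blackbox-main}, using deterministic signers. Since \eqref{eq:reduction-norm} then holds for every local stream, the guarantee is valid against adaptive inputs. Zero-sum inputs make the residuals equal to the prefix sums, so it suffices to bound the sum in \eqref{eq:reduction-norm} along any boundary path $P(k)$. This path has at most $h=\lceil\log_2T\rceil$ nodes, and the capacities roughly halve at each step: $m_u\le T/2^{j}+1$ at depth $j$.

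\textbf{Case $d\ge2$.} At each node I will use greedy vector signing on the augmented vectors $z_i=(x_i,1)/\sqrt2$: choose $\eps_i$ so that $\norm{S_{i-1}+\eps_i z_i}$ is minimized, where $S_{i-1}$ is the current signed sum. Minimizing forces $\ip{S_{i-1}}{\eps_iz_i}\le0$, hence $\norm{S_i}^2\le\norm{S_{i-1}}^2+\norm{z_i}^2\le\norm{S_{i-1}}^2+1$ and $\norm{S_m}\le\sqrt m$. Projecting onto the two blocks of coordinates gives $V(m)\le\sqrt{2m}$ and $W(m)\le\sqrt{2m}$. Then \eqref{eq:reduction-norm} is at most
\[
 \sum_{j\ge0}\bigl(\sqrt{2(T/2^j+1)}+1\bigr)=O(\sqrt T)+O(\log T)=O(\sqrt T),
\]
because the sum is geometric and dominated by the root.

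\textbf{Case $d=1$.} Here I need $V(m)+W(m)=O(1)$ at every node, so that the path contributes $O(h)=O(\log T)$ in total. The scalar rule sees $X_{u_L},X_{u_R}$ and the remaining capacities. Instead of a proposal rule, I would aim for an invariant on the actual routed split. The idea is to send $x$ to the child whose normalized sum is farther from the target sign: for example, route positive $x$ to the child with the smaller $X_c-(N_c/m_c)\cdot(\text{current mean})$. The mean is unknown online, though. So I would instead balance the proposed count and the vector sum jointly. Use greedy signing on $(x_i,1)/\sqrt2$ in the $\ell_\infty$ sense: maintain $|\sum\eps_ix_i|\le1$ and $|\sum\eps_i|\le 2$ simultaneously. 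Concretely, pair arrivals by sign of $x$. For positive $x$, alternate the sign among positive arrivals, choosing the side that brings the partial sum of positives closer to zero, and do the same for negative ones. Each subsequence then has signed sum in $[-1,1]$ and signed count in $\{-1,0,1\}$, because alternating signs on a sequence of reals in $[0,1]$ keep the partial alternating sum in $[-1,1]$. This gives $V(m)\le2$ and $W(m)\le2$ for every stream, deterministically. Substituting into \eqref{eq:reduction-norm} gives at most $3h=O(\log T)$.

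\textbf{Main obstacle.} The hardest part is the scalar construction: ensuring that the terminal bounds hold simultaneously for every prefix-free local stream, including after one child has filled. The reduction handles overrides, so only terminal bounds on the proposed signs matter. The alternating-by-sign scheme above is my first attempt. I would verify that the alternating sum of a sequence in $[0,1]$ stays in $[0,1]$ when the signs begin with $+$, which is easy. If that works, the two-class combination gives $V,W\le2$ as needed.
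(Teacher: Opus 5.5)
\textbf{Case $d\ge2$.} This part is correct and is essentially the paper's proof: greedy signing of $(v_t,1)/\sqrt2$ at each node, followed by a geometric sum over the depths of $P(k)$.

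\textbf{Case $d=1$: the construction fails.} This case has a genuine gap, because your construction does not give $V(m),W(m)\le2$. Alternating signs keep the partial sums of a \emph{non-increasing} sequence in $[0,1]$ inside $[0,1]$, but not of an arbitrary sequence. Consider positive arrivals $1,\tfrac1m,1,\tfrac1m,\dots$. Strict alternation then produces a signed sum of order $m$. If you instead choose the side that brings the partial sum closer to zero, then after an arrival $1$ the next $m$ arrivals $\tfrac1m$ all receive the sign $-1$, so the signed count is of order $m$. Splitting by the sign of $x$ does not help, because each class still contains a continuum of values; per-class alternation only works when there are finitely many distinct values.

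\textbf{The obstruction is not specific to your rule.} No deterministic signer achieves $V(m)+W(m)=O(1)$ on the augmented scalar stream against adaptive inputs. Track $(a,b)=(\sum_i\eps_ix_i,\sum_i\eps_i)$. Let $K$ be the set of states from which the adversary cannot force $|a|+|b|>D$ in finitely many steps. Then $K$ is bounded, and from each state in $K$, for every $x\in[-1,1]$, one of the successors $(a\pm x,b\pm1)$ lies in $K$.

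Suppose $K\neq\varnothing$. Take the largest level $b^*$ at which $\sup\{a:(a,b)\in K\}$ attains its overall maximum $M^*$, and a state $(a,b^*)\in K$ with $a$ close to $M^*$. Against the input $x=-\eta$ with small $\eta>0$, both $(a-\eta,b^*+1)$ and $(a+\eta,b^*-1)$ leave $K$, a contradiction.

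Hence the adversary can force $|a|+|b|>D$ within boundedly many steps. Afterwards the inputs $x=-\mathrm{sgn}(a)\,\mathrm{sgn}(b)$, with $\mathrm{sgn}(0)=1$, keep $|a|+|b|$ from decreasing. So the terminal bounds exceed $D/2$ for all large $m$. Consequently, for every deterministic signer, the right-hand side of \eqref{eq:reduction-norm} is $\omega(\log T)$ on boundary paths of length $\Theta(\log T)$, such as at $k=1$. Part~(i) of the reduction therefore cannot give the scalar bound.

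\textbf{The missing idea.} You need to give up proportional splitting and use the zero total directly. The paper's scalar rule reads the actual child sums $L,R$ and the remaining capacities $r_L,r_R$. It maintains $|L+r_Ly|+|R+r_Ry|\le|L+R+(r_L+r_R)y|+2$ for all $y\in[-1,1]$, so that $|\tot_{u_L}|+|\tot_{u_R}|\le|\tot_u|+2$ at completion. This does not control $\err_u$ when $\tot_u\neq0$. However, refining the prefix along $P(k)$ then costs at most $2$ per level in the sum of absolute block totals. The zero root total then gives $\bigl|\sum_{j\le k}v_{\pi(j)}\bigr|\le2h$.
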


\subsection{Scalar routing}
\label{subsec:scalar-routing}

We begin by giving a proof of the scalar upper bound in \cref{thm:deterministic-main}.

\begin{proof}[Proof of \cref{thm:deterministic-main} for $d=1$]
We use the following local rule at every node of
\cref{alg:tree-routing}, interpreting a left proposal as $+1$
and a right proposal as $-1$.
Fix a node $u$. For the local argument, write $L,R$ for the sums
already sent to its children and $r_L,r_R$ for their remaining
capacities; once the node is full, $L=\tot_{u_L}$ and
$R=\tot_{u_R}$. We maintain
\begin{equation}
    |L+r_Ly|+|R+r_Ry|
    \le
    |L+R+(r_L+r_R)y|+2
    \qquad(y\in[-1,1]).
    \label{eq:scalar-completion-invariant}
\end{equation}
Here $L+r_Ly$ and $R+r_Ry$ are the child totals if all remaining
positions receive the same value $y$. The inequality fails exactly
when one total exceeds $1$ and the other is below $-1$. Initially
$L=R=0$, so the two totals have the same sign and the invariant holds.

If one child is full, the local rule proposes the other, so the
meta-algorithm follows that proposal. Otherwise, propose a destination
according to
\[
    \begin{cases}
        \displaystyle
        \text{left if }\frac{L+1}{r_L}\le\frac{R+1}{r_R},
        \text{ right otherwise},
        & L+R+(r_L+r_R)v_t\ge0,\\[3mm]
        \displaystyle
        \text{left if }\frac{L-1}{r_L}\ge\frac{R-1}{r_R},
        \text{ right otherwise},
        & L+R+(r_L+r_R)v_t<0.
    \end{cases}
\]
Thus the hypothetical total determines which comparison the local
rule uses; the meta-algorithm updates the chosen sum and capacity
and continues from that child.
For preservation, suppose both children have space, the hypothetical
total using $v_t$ is nonnegative, and the rule chooses left.
A subsequent completion with $y$ would give child totals
$L+v_t+(r_L-1)y$ and $R+r_Ry$. 

There are only two possible
violations of the invariant: first, suppose the new left total exceeds $1$ and the right
total is below $-1$. The left total changed by $v_t-y\le2$, so its previous value
$L+r_Ly$ must exceed $-1$. Together with $R+r_Ry<-1$, this gives
\[
    \frac{L+1}{r_L}>-y>\frac{R+1}{r_R},
\]
contradicting the choice of the smaller ratio. Second, suppose the new left total is below $-1$ and the
right total exceeds $1$.
The old invariant at $y$ requires $L+r_Ly\ge-1$, so the assignment
must have decreased this total, meaning $v_t<y$. Therefore
\[
    L+r_Lv_t
    =
    \underbrace{L+v_t+(r_L-1)y}_{<-1}
    +
    \underbrace{(r_L-1)(v_t-y)}_{\le0}
    <-1.
\]
Applying the old invariant at $v_t$ now forces $R+r_Rv_t\le1$,
and hence $L+R+(r_L+r_R)v_t<0$, contradicting the assumed
nonnegative total. Choosing right is symmetric, and the
negative-total case follows by reversing all signs.

Forced routing also preserves the invariant. Indeed, if the left child
is full, assigning $v_t$ right and then filling its remaining
positions with $y$ gives the same totals as the previous
hypothetical completion with common value
$(v_t+(r_R-1)y)/r_R\in[-1,1]$.
The old invariant applies directly; the other case is symmetric.

At completion, $r_L=r_R=0$, so the invariant becomes
\[
    |\tot_{u_L}|+|\tot_{u_R}|\le|\tot_u|+2.
\]
Since nothing in the argument requires equal child capacities, the
rule applies directly to the balanced tree for every $T$.

To conclude, fix an output prefix $[1,k]$. Starting from the root,
split only the intervals in $P(k)$, as illustrated in \cref{fig:scalar-prefix-refinement}. Each split increases the sum of absolute interval totals by at most $2$. After these splits,
the prefix is a union of complete intervals, so the triangle inequality gives
\[
    \left|\sum_{t=1}^k v_{\pi(t)}\right|
    \le |\tot_{\mathrm{root}}|+2|P(k)|
    \le 2h.
\]
The last inequality uses the zero root total and the tree height.
The invariant holds for every arriving value, so the guarantee
also holds against adaptive adversaries.
\end{proof}

\begin{figure}[tbp]
\centering
\begin{tikzpicture}[
    x=0.70cm,y=0.60cm,
    every node/.style={font=\scriptsize},
    included/.style={fill=black!12}
]
\draw (0,3.3) rectangle (8,3.9);
\node at (4,3.6) {$[1,8]$};

\draw[included] (0,2.2) rectangle (4,2.8);
\draw (4,2.2) rectangle (8,2.8);
\node at (2,2.5) {$[1,4]$};
\node at (6,2.5) {$[5,8]$};

\draw[included] (0,1.1) rectangle (4,1.7);
\draw (4,1.1) rectangle (6,1.7);
\draw (6,1.1) rectangle (8,1.7);
\node at (2,1.4) {$[1,4]$};
\node at (5,1.4) {$[5,6]$};
\node at (7,1.4) {$[7,8]$};

\draw[included] (0,0) rectangle (4,0.6);
\draw[included] (4,0) rectangle (5,0.6);
\draw (5,0) rectangle (6,0.6);
\draw (6,0) rectangle (8,0.6);
\node at (2,0.3) {$[1,4]$};
\node at (4.5,0.3) {$5$};
\node at (5.5,0.3) {$6$};
\node at (7,0.3) {$[7,8]$};

\foreach \y in {3.3,2.2,1.1}
    \draw[->] (8.5,\y+0.1)
        --node[right] {$+2$ at most}(8.5,\y-0.5);

\node[anchor=east] at (-0.2,-0.8) {output positions};
\foreach \j in {1,...,8}{
    \ifnum\j<6
        \fill[black!12] (\j-1,-1.1) rectangle (\j,-0.5);
    \fi
    \draw (\j-1,-1.1) rectangle (\j,-0.5);
    \node at (\j-0.5,-0.8) {$\j$};
}
\end{tikzpicture}
\caption{Analyzing $[1,5]$ when $T=8$. Only a partially included
interval is split; shaded blocks are wholly inside the prefix.
Each refinement costs at most $2$ in the sum of absolute block
totals, so three refinements bound this prefix by $6$.}
\label{fig:scalar-prefix-refinement}
\end{figure}

\subsection{Greedy signing in arbitrary dimension}
\label{subsec:det-vector-routing}

The same routing algorithm, using a greedy signer at each node,
controls proportional-prefix residuals even when the total is nonzero. Hence, the upper bound in \Cref{thm:deterministic-main} for $d \ge 2$ immediately follows since, for zero-sum inputs, the residuals are
ordinary prefix sums.

\begin{proposition}[Deterministic proportional routing]
\label{prop:det-router}
For every stream of Euclidean unit-ball vectors, \cref{alg:tree-routing} with greedy signing of $(v_t,1)/\sqrt2$ satisfies
\begin{equation}
    \Res_{\max}(v_\pi)
    \le \frac{\sqrt2+1}{1-\sqrt{2/3}}\sqrt T.
    \label{eq:det-router}
\end{equation}
This bound holds against adaptive adversaries and requires no
advance knowledge of the total.
\end{proposition}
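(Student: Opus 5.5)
The plan is to apply part~\textup{(i)} of \cref{thm:blackbox-main} with the Euclidean norm. Each internal node runs the greedy signer on the augmented vectors $z_i=(x_i,1)/\sqrt2$, which lie in the unit ball of $\R^{d+1}$. The greedy signer picks $\eps_i$ to minimize $\norm{S_{i-1}+\eps_i z_i}$, where $S_i=\sum_{j\le i}\eps_j z_j$. With this choice $\ip{S_{i-1}}{\eps_i z_i}\le0$, so $\norm{S_i}^2\le\norm{S_{i-1}}^2+\norm{z_i}^2\le\norm{S_{i-1}}^2+1$. Hence at a node receiving $m$ vectors the terminal augmented sum satisfies $\norm{S_m}\le\sqrt m$. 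This holds deterministically for every stream, including adaptively chosen ones. Since $S_m=(\sum_i\eps_i x_i,q)/\sqrt2$, we get $\norm{\sum_i\eps_ix_i}^2+q^2\le 2m$. In particular one may take $V(m)=W(m)=\sqrt{2m}$, or more sharply $V+W\le\sqrt2\cdot\sqrt{2m}=2\sqrt m$ by Cauchy--Schwarz.

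Next, plug this into \eqref{eq:reduction-norm}. For every prefix $k$,
\[
\norm{\Res_k(v_\pi)}\le\sum_{u\in P(k)}\bigl(\sqrt{m_u}+\1\{m_u\text{ odd}\}\bigr).
\]
The boundary path $P(k)$ contains at most one node per depth, and the capacity at depth $j$ is at most $\lceil T/2^j\rceil$. The sum of square roots is therefore dominated by a geometric series. The sharper approach is to use the recursion $m_{u_L}=\lceil m_u/2\rceil$ and bound $m_{\text{child}}\le (m_u+1)/2$. I would also handle the parity term by absorbing it: for $m_u\ge 3$ we have $(m_u+1)/2\le \tfrac23 m_u$, so consecutive square roots shrink by at least a factor $\sqrt{2/3}$. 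This gives $\sum_{u\in P(k)}\sqrt{m_u}\le \sqrt T/(1-\sqrt{2/3})$, and the ratio $\sqrt{2/3}$ in the stated constant is consistent with this choice.

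The main obstacle is bookkeeping the additive $+1$ parity terms and the rounding so that the constant comes out exactly as $(\sqrt2+1)/(1-\sqrt{2/3})$. I would keep the looser per-node bound $(V+W)/2+1\le\sqrt{2m_u}+1\le(\sqrt2+1)\sqrt{m_u}$, valid since $m_u\ge1$ at internal nodes (indeed $m_u\ge2$). Combining this with the geometric decay $\sqrt{m_{\text{child}}}\le\sqrt{2/3}\,\sqrt{m_u}$ then gives
\[
\norm{\Res_k}\le(\sqrt2+1)\sqrt T\sum_{j\ge0}(2/3)^{j/2},
\]
which is exactly \eqref{eq:det-router}. The decay step needs checking at small nodes. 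For $m_u=2$ the child is a leaf and is not in $P(k)$. For $m_u=3$ the child of size $2$ satisfies $2\le \tfrac23\cdot3$. For $m_u\ge3$ generally, $\lceil m/2\rceil\le(m+1)/2\le 2m/3$. So the decay holds along the whole path.

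Finally, the remaining properties follow from the structure of the reduction. The guarantee is deterministic for every realized stream, so by part~\textup{(i)} it holds against adaptive adversaries. Neither the signer nor the router uses $\tot$, so no advance knowledge of the total is needed.
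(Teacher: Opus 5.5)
Your proposal is correct and takes essentially the same route as the paper. Greedy signing of the augmented vectors gives $\norm{S_m}\le\sqrt m$, part~(i) of \cref{thm:blackbox-main} turns this into a per-node bound $\sqrt{m_u}+\1\{m_u\text{ odd}\}\le(\sqrt2+1)\sqrt{m_u}$, and the capacity decay $m_{\text{child}}\le 2m_u/3$ along the boundary path yields the geometric series with ratio $\sqrt{2/3}$ (your explicit check of this decay at small capacities is a detail the paper only asserts).
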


\begin{proof}
At an internal node $u$, let $v_{u,i}$ be its $i$th arriving vector
and let $S_{u,i}$ be the signer's augmented signed sum after that
arrival, initially $S_{u,0}=0$. Choose
$\eps_{u,i}\in\{-1,+1\}$ so that
\[
    \eps_{u,i}\left\langle
        S_{u,i-1},\frac{(v_{u,i},1)}{\sqrt2}
    \right\rangle\le0,
\]
breaking ties in favor of $+1$, and update
$S_{u,i}=S_{u,i-1}+\eps_{u,i}(v_{u,i},1)/\sqrt2$.
The update records the proposed sign even if the algorithm overrides
its routing request. Expanding the squared norm gives
\[
\begin{aligned}
    \|S_{u,i}\|_2^2
    =\|S_{u,i-1}\|_2^2
      +\frac{\|v_{u,i}\|_2^2+1}{2}
      +2\eps_{u,i}\left\langle
          S_{u,i-1},\frac{(v_{u,i},1)}{\sqrt2}
       \right\rangle
    \le\|S_{u,i-1}\|_2^2+1\le i.
\end{aligned}
\]
Thus the vector and count imbalances obey the joint bound
\[
    \left\|\sum_{i=1}^{m_u}\eps_{u,i}v_{u,i}\right\|_2^2
    +\left|\sum_{i=1}^{m_u}\eps_{u,i}\right|^2
    =2\|S_{u,m_u}\|_2^2\le2m_u.
\]
The local estimate in the norm-and-count part of
\cref{thm:blackbox-main}, followed by Cauchy--Schwarz, therefore gives
\[
    \|\err_u\|_2
    \le\frac12\left(
        \left\|\sum_{i=1}^{m_u}\eps_{u,i}v_{u,i}\right\|_2
        +\left|\sum_{i=1}^{m_u}\eps_{u,i}\right|
       \right)+\1\{m_u\text{ odd}\} \le\sqrt{m_u}+\1\{m_u\text{ odd}\}.
\]
The boundary-path conclusion of \cref{thm:blackbox-main} now yields
\[
    \|\Res_k(v_\pi)\|_2 \le\sum_{u\in P(k)}\|\err_u\|_2 \le(\sqrt2+1)\sqrt T
       \sum_{\ell=0}^{\infty}(2/3)^{\ell/2} =\frac{\sqrt2+1}{1-\sqrt{2/3}}\sqrt T.
\]
Indeed, every internal node has children of capacity at most
$2m_u/3$, so a node at depth $\ell$ has capacity at most
$T(2/3)^\ell$, and $P(k)$ contains at most one node at each depth.
Every estimate holds for every local stream, which also proves
the adaptive guarantee.
\end{proof}

%% file: sections/04_corollaries.tex
\section{Randomized applications}
\label{sec:corollaries}

The reduction in \cref{thm:blackbox-main} accepts any online signer
satisfying its terminal guarantees, and, as in the deterministic case, this flexibility yields several
randomized applications. As our main application, we use Aden-Ali's signer \cite{AdenAli2026} since it
satisfies the subgaussian hypothesis in part~\textup{(ii)} and can be implemented efficiently. Moreover, we will derive a coordinatewise bound from it, and a sparse-vector application from the norm-and-count guarantees in
part~\textup{(i)} and the Altschuler--Tikhomirov's signer \cite{AltschulerTikhomirov2026}. Throughout this section, inputs are fixed independently of the algorithm's randomness, and all bounds allow
arbitrary, initially unknown totals and become prefix-discrepancy
bounds when the total is zero.

\subsection{Efficient randomized ordering}

\begin{corollary}[Efficient signer]
\label{cor:aa-overview}
For every $T\ge2$, $d\ge1$, and $0<\delta<1$, \Cref{alg:tree-routing}
with the Aden-Ali signer uses $O(dT\log T)$ arithmetic operations and,
for every fixed Euclidean unit-ball stream, satisfies, with probability at least $1-\delta$,
\[
 \Res_{\max}(v_\pi)
 \in O\left(\log T+
 \sqrt{\bigl(d+\log(1/\delta)\bigr)\log T}\right).
\]
\end{corollary}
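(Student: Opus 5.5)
The plan is to instantiate part~(ii) of \cref{thm:blackbox-main} with Aden-Ali's online signer \cite{AdenAli2026}, run in dimension $d+1$ on the augmented vectors $(v_t,1)/\sqrt2$ at each internal node, with one fresh independent copy per node. The first step is to check the hypothesis \eqref{eq:terminal-mgf}. The result of \cite{AdenAli2026}, like that of Kulkarni et al., gives a universal subgaussian guarantee for every signed prefix. Specifically, for every fixed stream in the unit ball $B_2^{D}$, every prefix signed sum is $O(1)$-subgaussian in every direction, with a constant independent of $D$ and the stream length. Applying this to the last prefix gives the terminal bound \eqref{eq:terminal-mgf} with some absolute scale $\Sigma\ge1$, after enlarging $\Sigma$ to at least one if needed. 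The bound holds in dimension $d+1$ and for every node capacity $m_u$ simultaneously.

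I expect this translation to be the only delicate point. Aden-Ali's statement may be phrased as a subgaussian tail, or as a norm $\|\cdot\|_{\psi_2}$, rather than as an mgf bound with variance proxy $\Sigma^2$. I would use the standard equivalence that a centered random vector with $\psi_2$-norm at most $c$ in every direction satisfies $\mathbb E e^{\langle\theta,X\rangle}\le e^{C'c^2\|\theta\|^2}$. This gives \eqref{eq:terminal-mgf} with $\Sigma=O(1)$.

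Centering is needed for that equivalence, so I must confirm $\mathbb E\sum\eps_i z_i=0$. It follows from the symmetric form of the guarantee, since the mgf bound in both $\theta$ and $-\theta$ forces mean zero. Equivalently, I may cite the mgf form directly if the paper states it that way.

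With this scale in hand, \eqref{eq:reduction-subgaussian} immediately gives, for every oblivious input and every $0<\delta<1$, with probability at least $1-\delta$,
\[
\Res_{\max}(v_\pi)\le C\Sigma\Bigl(\log T+\sqrt{\log T\,(d+\log(1/\delta))}\Bigr).
\]
Because $\Sigma$ is absolute, this is the claimed $O(\cdot)$ bound. Here $h=\lceil\log_2T\rceil=O(\log T)$ for $T\ge2$. Independence of the tapes is guaranteed by our choice of fresh copies per node.

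For the running time, each arriving vector visits one node per depth, so it visits at most $h$ nodes. At each visited node the algorithm performs three pieces of work. It makes one Aden-Ali update on a $(d+1)$-dimensional vector, which takes $O(d)$ time by the $O(dT)$-total, i.e.\ $O(d)$-per-step, guarantee of \cite{AdenAli2026}. It also forms $(v_t,1)/\sqrt2$ in $O(d)$ time, and updates $N_c$ and $X_c$ in $O(d)$ time. Summing over $T$ arrivals and $h$ levels gives $O(dT\log T)$ operations. Initialization of all $O(T)$ node states costs $O(dT)$ if states are allocated lazily, or equivalently at most $O(dT)$ in total since $\sum_u 1=O(T)$. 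If Aden-Ali's per-step cost is only amortized, the argument still holds: node $u$ receives exactly $m_u$ vectors, so the total over $u$ is $\sum_u O(dm_u)=O(dT\log T)$, because each depth's capacities sum to at most $T$.
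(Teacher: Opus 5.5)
Your overall route matches the paper's. You apply part~(ii) of \cref{thm:blackbox-main} with an independent Aden-Ali copy at each node on $(x_i,1)/\sqrt2$, and you bound the running time using that capacities at each depth sum to at most $T$. The step that fails as written is the one you flagged, the verification of \eqref{eq:terminal-mgf}, because your centering argument is circular. A $\psi_2$-type guarantee, say $\mathbb E\exp\bigl(\ip{\theta}{X}^2/(c\norm{\theta}^2)\bigr)\le2$, is automatically invariant under $\theta\mapsto-\theta$ since it only involves $\ip{\theta}{X}^2$. Such a guarantee is compatible with $\mathbb E X\neq0$; a deterministic $X$ of small norm satisfies it. Only the mgf bound itself forces mean zero, and that bound is what you are trying to derive. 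Without centering, a $\psi_2$ bound gives only $\mathbb E e^{\ip{\theta}{X}}\le e^{\norm{\theta}\,\norm{\mathbb E X}+C\norm{\theta}^2}$. This is not of the form \eqref{eq:terminal-mgf}: by Jensen, any nonzero mean violates \eqref{eq:terminal-mgf} for small $\norm{\theta}$. So part~(ii) cannot be invoked on that basis. Your fallback, citing an mgf form directly, is also not what Aden-Ali's result supplies.

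The paper closes this step using Aden-Ali's guarantee (his Theorem~1 and Equation~(5)). That result writes the terminal signed sum as $G_1+G_2+G_3$ with each $G_j\sim N(0,I_{d+1})$, where the $G_j$ may be dependent. H\"older's inequality with three equal exponents then gives
\[
 \mathbb E\exp\Bigl(\ip{\theta}{\textstyle\sum_i\eps_i z_i}\Bigr)
 \le\prod_{j=1}^3\bigl(\mathbb E e^{3\ip{\theta}{G_j}}\bigr)^{1/3}
 =e^{9\norm{\theta}^2/2},
\]
that is, $\Sigma=3$. Centering is automatic because each $G_j$ is centered.

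If you had only a $\psi_2$ guarantee, as for Kulkarni--Reis--Rothvoss, the paper's fix is a random global sign. Multiply all proposals at a node by an independent uniform sign $\tau$ drawn before its stream, and leave the signer's internal updates unchanged. This makes the terminal sum symmetric, hence centered, without changing the $\psi_2$ bound, and then your $\psi_2$-to-mgf conversion applies. With either repair, the rest of your proposal is correct, including the application of \eqref{eq:reduction-subgaussian} and the running-time count.
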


\begin{proof}
We apply part~\textup{(ii)} of \cref{thm:blackbox-main}, whose assumption is the terminal moment-generating-function bound
\eqref{eq:terminal-mgf} in dimension $d+1$. Specifically, at a node receiving
$x_1,\ldots,x_m$, we run an independent copy of Aden-Ali's signer on $z_i = (x_i,1)/\sqrt{2}$ so that $\|z_i\|_2^2 \le 1$. We keep this copy running throughout the local stream, recording its proposed signs even when the router overrides a destination to
enforce the child capacities.

To verify the required assumption in \eqref{eq:terminal-mgf}, we observe that, for every fixed augmented stream, Aden-Ali's signer's guarantee
\cite[Theorem~1 and Equation~(5)]{AdenAli2026} represents the terminal
signed sum as $\sum_{i=1}^m\eps_i z_i=G_1+G_2+G_3$, where $G_1,G_2,G_3 \sim N(0,I_{d+1})$ are Gaussian vectors that may be dependent. Then, for every
$\theta\in\R^{d+1}$, H\"older's inequality and the Gaussian
moment-generating function give
\begin{equation}
\begin{aligned}
 \mathbb E\exp\!\left(\ip{\theta}{\sum_{i=1}^m\eps_i z_i}\right)
 &\le\prod_{j=1}^3
   \left(\mathbb E e^{3\ip{\theta}{G_j}}\right)^{1/3} =\prod_{j=1}^3
   \left(e^{9\norm{\theta}^2/2}\right)^{1/3}
 =e^{9\norm{\theta}^2/2}.
\end{aligned}
\label{eq:aa-mgf}
\end{equation}
Hence, \eqref{eq:terminal-mgf} is verified with $\Sigma=3$, uniformly over all node capacities.
Furthermore, conditioning on a node's strict-ancestor tapes fixes
its input stream while leaving its own tape independent, and thus,
\eqref{eq:aa-mgf} also supplies the conditional hypothesis used in
\cref{lem:quota-repair,lem:path-composition}.

For completeness, we spell out the resulting substitutions in the
reduction: set $h=\lceil\log_2T\rceil$ and
$s=\log(2(T+1)/\delta)$, and, with $\Sigma=3$, the two estimates of
\cref{lem:path-composition} read, for each $k$,
\begin{align*}
 \Pb\!\left(
 \left\|\frac12\sum_{u\in P(k)}\alpha_{u,k}Y_u\right\|_2
 >3C\sqrt{h(d+s)}\right)&\le e^{-s},\\
 \Pb\!\left(
 \sum_{u\in P(k)}\alpha_{u,k}|q_u|
 >3C\sqrt{h(h+s)}\right)&\le e^{-s}.
\end{align*}
A union bound over these two events and the $T+1$ prefixes has total
failure probability at most $2(T+1)e^{-s}=\delta$. Outside this
failure event, \cref{lem:one-path,lem:quota-repair} give, for every $k$,
\begin{align*}
 \norm{\Res_k(v_\pi)}
 &\le\left\|\frac12\sum_{u\in P(k)}\alpha_{u,k}Y_u\right\|_2
   +\sum_{u\in P(k)}\alpha_{u,k}\norm{R_u} \le\left\|\frac12\sum_{u\in P(k)}\alpha_{u,k}Y_u\right\|_2
   +h+\sum_{u\in P(k)}\alpha_{u,k}|q_u|\\
 &\le 3C\sqrt{h(d+s)}+h+3C\sqrt{h(h+s)} \in O\left(\log T+
 \sqrt{\bigl(d+\log(1/\delta)\bigr)\log T}\right),
\end{align*}
where the last step uses
$s=O(h+\log(1/\delta))$, $\sqrt{a+b}\le\sqrt a+\sqrt b$ and $h \in O(\log T)$.

Finally, a length-$m$ signer in dimension $d+1$ uses
$O(dm)$ arithmetic operations, and, since at each depth, the node
capacities sum to at most $T$, and there are at most $h$ internal
levels, the overall running time
$O(dT\log T)$.
\end{proof}

We next discuss an alternative signer and establish a sharper dimension dependence.

\paragraph{Using an inefficient signer.}  If computational efficiency is not required, the reduction also
allows the Kulkarni--Reis--Rothvoss's signer in place of Aden-Ali's.
For each fixed augmented unit-ball stream, its terminal guarantee is
\cite[Theorem~1]{KulkarniReisRothvoss2024}
\begin{equation}
 \mathbb E\exp\!\left(
 \frac{\ip{\theta}{\sum_i\eps_i z_i}^2}{100\norm{\theta}^2}
 \right)\le2
 \qquad(\theta\ne0).
 \label{eq:krr-psi2}
\end{equation}
At each node, sample an independent uniform sign $\tau\in\{-1,+1\}$
before processing its stream, and send $\tau\eps_i$ as the proposals
without changing the Kulkarni--Reis--Rothvoss signer's internal updates. The terminal sum
is then symmetric and centered, while \eqref{eq:krr-psi2} is
unchanged. Accordingly, we call this the \emph{symmetrized Kulkarni--Reis--Rothvoss's signer}. By \cite[Lemma~11]{KulkarniReisRothvoss2024},
\[
 \mathbb E\exp\!\left(\ip{\theta}{\tau\sum_i\eps_i z_i}\right)
 \le e^{40\norm{\theta}^2}.
\]
Thus \eqref{eq:terminal-mgf} holds with $\Sigma=\sqrt{80}$, and
\cref{thm:blackbox-main}\textup{(ii)} gives the same probability and
discrepancy bound as \cref{cor:aa-overview}, up to the absolute
constant. Only the efficiency guarantee is lost, since the finite-horizon
Kulkarni--Reis--Rothvoss's construction takes exponential time
\cite[Sections~6--7]{KulkarniReisRothvoss2024}.

\begin{lemma}[Sharper dimension dependence]
\label{lem:stopped-tree}
For every $T\ge2$, $d\ge1$, and $0<\delta<1$, there is a randomized
router such that, for every fixed Euclidean unit-ball stream,
with probability at least $1-\delta$,
\begin{equation}\label{eq:stopped-tree}
 \Res_{\max}(v_\pi)
 \in O\left(\min\Bigl\{\sqrt T,\;
 \log T+\sqrt{d_*\log(2+T/d_*)}
 +\sqrt{\log(2+T/d_*)\log(1/\delta)}\Bigr\}\right),
\end{equation}
where $d_*:=\min\{d,T\}$.
With the Aden-Ali signer, the running time is $O(dT\log T)$.
The same bound holds with the symmetrized Kulkarni--Reis--Rothvoss's signer, with a
different absolute constant and without the running-time guarantee.
\end{lemma}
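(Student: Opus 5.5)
We will prove \cref{lem:stopped-tree} with a hybrid router, following the hint after \cref{thm:blackbox-main}: use randomized signers near the root and deterministic greedy signing near the leaves. The $\sqrt T$ term is handled separately. If $\sqrt T$ is the smaller quantity, run the deterministic router of \cref{prop:det-router}, which gives $O(\sqrt T)$ surely. It therefore suffices to build a router achieving the second expression. Since the choice between the two routers depends only on $T,d,\delta$, it is made before the stream arrives and is legitimate online.

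For the second expression, fix a cutoff capacity $m_0:=\max\{2,\lceil d_*\rceil\}$, or a constant multiple of it. Every internal node with $m_u> m_0$ runs an independent Aden-Ali signer on $(x,1)/\sqrt2$. Every node with $m_u\le m_0$ runs the greedy signer from the proof of \cref{prop:det-router}. Fix a prefix $k$ and split $P(k)$ into the upper part $P_{\rm up}(k)$, with capacities $>m_0$, and the lower part $P_{\rm low}(k)$. Because capacities shrink geometrically (each child has at most $2m_u/3$ and at least $\lfloor m_u/2\rfloor$), the upper part has length $h':=O(\log(2+T/d_*))$. By \cref{lem:one-path}, $\Res_k=\sum_{P_{\rm up}}\alpha_{u,k}\err_u+\sum_{P_{\rm low}}\alpha_{u,k}\err_u$.

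\emph{Lower part.} The deterministic estimate in the proof of \cref{prop:det-router} gives $\norm{\err_u}\le\sqrt{m_u}+1$. Summing along a path with at most one node per depth, starting from capacity at most $m_0$, the geometric series gives $O(\sqrt{m_0})+|P_{\rm low}(k)|=O(\sqrt{d_*}+\log T)$. This holds surely, and since $\log(2+T/d_*)\ge\log 2$, the term $\sqrt{d_*}$ is dominated by $\sqrt{d_*\log(2+T/d_*)}$.

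\emph{Upper part.} We rerun \cref{lem:path-composition} restricted to $P_{\rm up}(k)$. The conditioning argument from the deepest node upward is unchanged, because greedy nodes lie strictly below every upper node and upper nodes never consult descendants. The only change is that $h$ becomes $h'$ in both the sum $\sum\alpha^2\le h'$ and in $2^{|P|}\le e^{h'}$. This gives tails $C\Sigma\sqrt{h'(d+s)}$ for the $Y$-part and $C\Sigma\sqrt{h'(h'+s)}$ for the $q$-part, plus $h'$ from the $+1$ in $\norm{R_u}\le|q_u|+1$.

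\emph{Main obstacle: the union bound.} Taking $s=\log(2(T+1)/\delta)$ reintroduces $\sqrt{h'\log T}$, which is fine because it is at most $\log T$. The bigger problem is that $d$ appears instead of $d_*$ when $d>T$. To handle this, note that the vectors reaching a node span at most $m_u$ dimensions, though that subspace is random. Instead, I would project the whole input once onto $\operatorname{span}\{v_1,\dots,v_T\}$, which is fixed for an oblivious input. That is not online-computable, however. The first fix to try is to take the $1/2$-net only inside the span of the (fixed) input stream, of dimension at most $d_*$. The directional MGF bound \eqref{eq:y-mgf} holds for any unit $z$, and the relevant residual lies in that span, so the net has size $e^{2d_*}$ and $d$ is replaced by $d_*$ without changing the algorithm.

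Collecting terms gives $O\bigl(\log T+\sqrt{h'(d_*+\log T+\log(1/\delta))}\bigr)$. Then $\sqrt{h'\log T}\le\log T$, and we obtain the stated form. The running time is that of \cref{cor:aa-overview}, plus $O(dT\log T)$ for the greedy levels. Replacing the signer by the symmetrized Kulkarni--Reis--Rothvoss signer changes only $\Sigma$ to $\sqrt{80}$.
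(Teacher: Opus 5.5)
Your proof is correct and uses the same hybrid router as the paper: randomized signers above scale $d_*$ and greedy signing below it. Your capacity cutoff $m_0$ is the paper's depth cutoff $s=\lceil\log_2(T/d)\rceil$ up to rounding. The aggregation step, however, differs.

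The paper applies \cref{lem:path-composition} only at the at most $2^s-1$ endpoints of the depth-$s$ intervals, with tail parameter $\log(2^{s+1}/\delta)$. It then treats an arbitrary prefix by writing its residual as the local proportional residual inside its depth-$s$ interval plus a convex combination of the two endpoint residuals, so \cref{prop:det-router} is used as a black box on each interval. Because that union bound involves only $2^{s+1}$ events, no $\log T$ enters through the tail parameter.

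You instead split every boundary path $P(k)$, via \cref{lem:one-path}, into its randomized and greedy parts. You bound the greedy part surely by the geometric sum of the per-node estimates $\sqrt{m_u}+1$, and union-bound the randomized part over all $T+1$ prefixes. This avoids the interpolation identity at the price of an extra $\sqrt{h'\log T}=O(\log T)$, which is harmless because $\log T$ is already in the target bound.

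The obstacle you raise for $d>T$ never actually arises with your cutoff. In that case $m_0=T$, every node is greedy, and you recover exactly the paper's choice $s=0$. Your net-in-the-span fix is nonetheless sound. For an oblivious input, every $Y_u$ and every residual lies in the fixed span of $v_1,\ldots,v_T$, which has dimension at most $d_*$, and \eqref{eq:y-mgf} holds for every unit $z$. This analysis-only device would be needed if you took the cutoff to be a constant multiple of $d_*$ smaller than $d_*$.
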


\begin{proof}
Use the randomized signer only at nodes strictly above a depth
$0\le s\le h=\lceil\log_2T\rceil$, and the deterministic router of
\cref{prop:det-router} within each remaining interval.
For $s\ge1$, apply \cref{lem:path-composition} with path length
at most $s$ and tail parameter $\log(2^{s+1}/\delta)$.
A union bound over the two estimates at each of the at most
$2^s-1$ nontrivial interval endpoints bounds all their residuals by
$C(s+\sqrt{s(d+\log(1/\delta))})$ with probability at least $1-\delta$.
Inside an interval, a global residual equals the local proportional
residual plus a convex combination of the two endpoint residuals.
Since each interval has length at most $\lceil T/2^s\rceil$,
\cref{prop:det-router} therefore gives
\[
 \Res_{\max}(v_\pi)
 \in O\left(s+\sqrt{s\bigl(d+\log(1/\delta)\bigr)}
 +\sqrt{T/2^s}\right)
\]
on the same event, where the ceiling is absorbed into the constant
because $s\le h$.

For $s=0$, this is the deterministic $C\sqrt T$ bound.
When $d<T$, choose $s=\lceil\log_2(T/d)\rceil$.
Then $T/2^s\le d$ and $s=O(\log(2+T/d))$.
Comparing this choice with $s=0$ before the stream begins gives
\eqref{eq:stopped-tree}.
For $d\ge T$, use $s=0$.
The running time remains $O(dT\log T)$ with the Aden-Ali signer,
since the total node capacity at each depth is at most $T$.
The same probability argument applies to the symmetrized Kulkarni--Reis--Rothvoss's signer.
Finally, if $\delta\ge T^{-\beta}$ for fixed $\beta>0$, then $\sqrt{\log(2+T/d_*)\log(1/\delta)}=O_\beta(\log T).$
\end{proof}

\subsection{Coordinatewise discrepancy}

The next consequence uses the directional estimates in the proof of
part~\textup{(ii)} of \cref{thm:blackbox-main}. A union bound over
the relevant coordinate directions replaces the sphere net used for
the $\ell_2$-norm.

\begin{corollary}[Coordinatewise ordering]
\label{rem:coordinatewise}
For every fixed Euclidean unit-ball stream and $0<\delta<1$,
\Cref{alg:tree-routing}, using either the symmetrized Kulkarni--Reis--Rothvoss's
signer or the Aden-Ali signer throughout the full tree, satisfies, with probability at least $1-\delta$,
\[
 \max_{0\le k\le T}\|\Res_k(v_\pi)\|_\infty
 \in O\left(\log T + \sqrt{\log T \cdot \log(1/\delta)}\right).
\]
\end{corollary}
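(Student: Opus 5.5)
The plan is to rerun the proof of part~(ii) of \cref{thm:blackbox-main}, with the union bound over a sphere net replaced by a union bound over a \emph{fixed, input-dependent} set of at most $T^2$ coordinate directions. The remaining coordinates are handled deterministically. This removes any dependence on $d$, since a naive union bound over all $d$ coordinates would produce a $\log d$ term.

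\textbf{Step 1: split the coordinates.} Since the input is oblivious, we may fix it and set $\beta_j:=\sum_{t=1}^T|v_{t,j}|$ for each coordinate $j\in[d]$. Call $j$ \emph{light} if $\beta_j\le1$ and \emph{heavy} otherwise. For a light coordinate and every $k$,
\[
|\Res_k(v_\pi)_j|\le\sum_{i=1}^k|v_{\pi(i),j}|+\frac kT\beta_j\le2\beta_j\le2,
\]
deterministically. For heavy coordinates we use Cauchy--Schwarz and $\|v_t\|_2\le1$:
\[
\sum_j\beta_j^2\le\sum_j T\sum_t v_{t,j}^2\le T^2 .
\]
Hence the set $H$ of heavy coordinates is a deterministic set with $|H|\le T^2$, and it is fixed before any randomness is drawn.

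\textbf{Step 2: directional estimates for heavy coordinates.} Fix $k$ and use \cref{lem:one-path,lem:quota-repair} as in the proof of part~(ii):
\[
\Res_k(v_\pi)=\frac12\sum_{u\in P(k)}\alpha_{u,k}Y_u+\sum_{u\in P(k)}\alpha_{u,k}R_u .
\]
The second term has Euclidean norm, and hence every coordinate, at most $h+\sum_{u}\alpha_{u,k}|q_u|$. This quantity is controlled by \eqref{eq:path-count-tail}, which is dimension-free. For the first term, the directional part of the proof of \cref{lem:path-composition} applies to each fixed unit vector $z=\pm e_j$ with $j\in H$. The deepest-to-root conditioning gives
\[
\mathbb E\exp\Bigl(\tfrac{\lambda}{2}\sum_u\alpha_{u,k}\ip{z}{Y_u}\Bigr)\le e^{h\Sigma^2\lambda^2/2}.
\]
By Chernoff, the coordinate exceeds $\Sigma\sqrt{2h s'}$ with probability at most $e^{-s'}$. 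Both signers qualify: the Aden-Ali signer has $\Sigma=3$ by \eqref{eq:aa-mgf}, and the symmetrized Kulkarni--Reis--Rothvoss signer has $\Sigma=\sqrt{80}$.

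\textbf{Step 3: union bound.} Take $s'=\log(4T^2(T+1)/\delta)$ for the $2|H|\le2T^2$ signed heavy directions at each of the $T+1$ prefixes. Use the same order of tail parameter in \eqref{eq:path-count-tail} for the count term at each prefix. The total failure probability is then at most $\delta$. On the complementary event, every coordinate of every residual is bounded by
\[
2+O\bigl(\sqrt{h(\log T+\log(1/\delta))}\bigr)+h+O\bigl(\sqrt{h(h+\log T+\log(1/\delta))}\bigr).
\]
Since $h=O(\log T)$, this is $O(\log T+\sqrt{\log T\log(1/\delta)})$, as claimed.

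\textbf{Main obstacle.} The step I expect to matter most is Step~1. A direct union bound over all $d$ coordinates reintroduces $\log d$. The point is that the light/heavy split depends only on the fixed input, so the union bound over $H$ is legitimate, while light coordinates cost only an additive constant. I would double-check two details. First, that the residual's dependence on the randomness does not affect the deterministic light-coordinate bound; it does not, since that bound holds for every permutation. Second, that the conditional MGF argument of \cref{lem:path-composition} is genuinely per-direction, so that no net is needed.
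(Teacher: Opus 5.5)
Your proposal is correct and follows the paper's proof essentially step for step: the Cauchy--Schwarz bound $\sum_j\bigl(\sum_t|v_{t,j}|\bigr)^2\le T^2$ isolates at most $T^2$ exceptional coordinates, the remaining coordinates are bounded deterministically, the conditional MGF plus Chernoff estimate is applied separately to each direction $\pm e_j$, the quota corrections are handled by the dimension-free count estimate \eqref{eq:path-count-tail}, and a union bound is taken over $O(T^3)$ events. The only difference is cosmetic: you bound light coordinates by $2$, whereas the paper gets $1$ by noting that each input coordinate enters the residual with coefficient $1-k/T$ or $-k/T$, and this changes nothing in the conclusion.
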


\begin{proof}
The proof of
\cref{cor:aa-overview} verifies
\eqref{eq:terminal-mgf} with an absolute scale $\Sigma$ for either signer. We use the same path decomposition and
quota-repair estimate as in \cref{thm:blackbox-main}, but first
identify which coordinates need a probabilistic estimate. By Cauchy--Schwarz and the unit-ball assumption,
\[
 \sum_{j=1}^d\left(\sum_{t=1}^T|v_{t,j}|\right)^2
 \le T\sum_{t=1}^T\|v_t\|_2^2\le T^2,
\]
which implies that at most $T^2$ coordinates satisfy
$\sum_t|v_{t,j}|>1$. Every other coordinate has residual of magnitude
at most one for every permutation: in a proportional residual, each
input coordinate has coefficient either $1-k/T$ or $-k/T$, both of
absolute value at most one. It is worth noting that the exceptional coordinates depend only
on the fixed input and are used only in the analysis, and the algorithm
does not need to identify them.

For any fixed unit direction $z$, the conditional moment calculation
in the proof of \cref{lem:path-composition}, followed by Chernoff's
bound, gives, for all $s > 0$,
\[
 \Pb\!\left(
 \frac12\sum_{u\in P(k)}\alpha_{u,k}\ip{z}{Y_u}
 >\Sigma\sqrt{2hs}\right)\le e^{-s}.
\]
Apply this estimate to both signs of each exceptional coordinate
direction. Also apply the count estimate \eqref{eq:path-count-tail}
at each prefix. There are at most $(2T^2+1)(T+1)$ events in total,
so choosing $s = \log((2T^2+1)(T+1)/\delta)$ and taking a union bound makes them all hold with probability at least $1-\delta$.

On this event, the centered path sum in every exceptional coordinate
has magnitude at most $\Sigma\sqrt{2hs}$. The correction is controlled
without a union bound over coordinates, because
\[
 \left\|\sum_{u\in P(k)}\alpha_{u,k}R_u\right\|_\infty
 \le\sum_{u\in P(k)}\alpha_{u,k}\|R_u\|_2
 \le h+\sum_{u\in P(k)}\alpha_{u,k}|q_u|.
\]
Combining the two estimates gives, at every prefix and in every
exceptional coordinate, a residual of magnitude at most
\[
 \Sigma\sqrt{2hs}+h+C\Sigma\sqrt{h(h+s)}
 \le C\Sigma\bigl(h+\sqrt{hs}\bigr).
\]
Now $s \in O(\log T+\log(1/\delta))$ and
$h \in O(\log T)$. Absorbing the bound one for the remaining
coordinates proves the claim. The running-time calculation is the
same as in \cref{cor:aa-overview}.
\end{proof}

\subsection{Sparse vectors}

Here we apply part~\textup{(i)} of \cref{thm:blackbox-main} in the
$\ell_\infty$ norm, with the input vectors that need only be coordinate-bounded, and their $\ell_2$-norms may be as large as $\sqrt\kappa$. In particular, we use copies of the Altschuler--Tikhomirov signer \cite{AltschulerTikhomirov2026}.

\begin{corollary}[OSP for sparse vectors]
\label{cor:at-overview}
For every $\eta>0$, there is a constant $C_\eta>0$ with the following
property. Let $T\ge2$, let $\kappa\in\{1,\ldots,d\}$ be a sparsity
bound known in advance, and let $0<\delta<1$. Write
$h=\lceil\log_2T\rceil$ and suppose that $\delta \ge C_\eta Th\exp(-(\kappa+1)/(C_\eta\log^{2+\eta}(e(\kappa+1))))$. Then, for every fixed stream $v_t\in[-1,1]^d$ with
$\|v_t\|_0\le\kappa$, \Cref{alg:tree-routing}, using the
Altschuler--Tikhomirov signer, satisfies, with probability at least $1-\delta$,
\[
 \max_{0\le k\le T}\|\Res_k(v_\pi)\|_\infty
 \le(C_\eta\sqrt{\kappa+1}+1)h \in O_\eta(\sqrt\kappa\log T).
\]
\end{corollary}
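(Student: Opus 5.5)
We will apply part~\textup{(i)} of \cref{thm:blackbox-main} with the norm $\lVert\cdot\rVert_\infty$. Inputs lie in $[-1,1]^d$, so they have $\ell_\infty$-norm at most one, as part~(i) requires. At each internal node $u$ we run an independent copy of the Altschuler--Tikhomirov signer on the augmented vectors $(x_i,1)\in[-1,1]^{d+1}$. These have at most $\kappa+1$ nonzero coordinates, and the signer's proposed signs are recorded even when the router overrides them. Its guarantee, for a fixed stream of length $m$ in $[-1,1]^{d+1}$ with sparsity $\kappa+1$, gives a terminal signed sum of $\ell_\infty$-norm at most $C_\eta\sqrt{\kappa+1}$. (We only need the terminal sum, which is implied by their prefix-discrepancy statement.) This bound holds except with probability at most some $p$ of the form $C_\eta m\exp(-(\kappa+1)/(C_\eta\log^{2+\eta}(e(\kappa+1))))$; we adjust $C_\eta$ so that the dependence on $m$ is at most linear, possibly via a union bound over prefixes inside their theorem. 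Reading off the first $d$ coordinates and the last coordinate of this one bound gives both $V(m)\le C_\eta\sqrt{\kappa+1}$ and $W(m)\le C_\eta\sqrt{\kappa+1}$ jointly, with the same failure probability $p(m)$.

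Next we insert these into \eqref{eq:reduction-norm}. The per-node cost is at most $C_\eta\sqrt{\kappa+1}+\1\{m_u \text{ odd}\}\le C_\eta\sqrt{\kappa+1}+1$, and $|P(k)|\le h$. This gives $\lVert\Res_k(v_\pi)\rVert_\infty\le(C_\eta\sqrt{\kappa+1}+1)h$ simultaneously for all prefixes, outside the failure event.

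For the probability, part~(i) yields failure at most $\sum_u p(m_u)$. Node capacities at each depth sum to at most $T$ and there are at most $h$ internal levels, so $\sum_u m_u\le Th$. If $p$ is linear in $m$, the total is at most $C_\eta Th\exp(-(\kappa+1)/(C_\eta\log^{2+\eta}(e(\kappa+1))))\le\delta$ by the hypothesis. If the AT failure bound is instead a per-stream constant independent of $m$, there are fewer than $T$ internal nodes, so the same hypothesis suffices even more easily.

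The main obstacle is matching the exact form of the Altschuler--Tikhomirov guarantee. This means confirming that their failure probability scales as assumed, that their threshold applies to sparsity $\kappa+1$ in dimension $d+1$, and that their constant can be absorbed into a single $C_\eta$ serving both the threshold and the discrepancy. We would first state their theorem as a terminal bound with failure probability at most $C_\eta m\exp(\cdots)$, enlarging $C_\eta$ as needed. We would also stress that the conditioning argument in part~(i) requires only fixed-stream guarantees, which hold because the input is oblivious. Finally, $(C_\eta\sqrt{\kappa+1}+1)h=O_\eta(\sqrt\kappa\log T)$ since $\kappa\ge1$.
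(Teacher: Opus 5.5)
Your proposal is correct and follows essentially the same route as the paper. Both run independent Altschuler--Tikhomirov copies on the augmented vectors $(x_i,1)$ with announced sparsity $\kappa+1$, and read $V(m)=W(m)=C_\eta\sqrt{\kappa+1}$ off the single terminal $\ell_\infty$ bound. Both then insert these into \eqref{eq:reduction-norm} along paths of length at most $h$, and control the failure probability by $\sum_u p(m_u)\le C_\eta Th\exp\bigl(-(\kappa+1)/(C_\eta\log^{2+\eta}(e(\kappa+1)))\bigr)\le\delta$, using that node capacities at each depth sum to at most $T$.
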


\begin{proof}
Fix $\kappa\ge1$ and $T\ge2$. At a node receiving
$x_1,\ldots,x_m$, append the quota coordinate and give $(x_i,1)$
to an independent copy of the signer of Altschuler--Tikhomirov
\cite[Theorem~1.1]{AltschulerTikhomirov2026}, with announced sparsity
$\kappa+1$. Thus, each augmented vector belongs to $[-1,1]^{d+1}$ and has
at most $\kappa+1$ nonzero coordinates. Moreover, note that no Euclidean normalization
is needed here, because we are using the norm-and-count part of the
reduction with the $\ell_\infty$ norm. Next, for each fixed local stream, the signer's simultaneous-prefix
guarantee implies, at the terminal prefix,
\[
 \left\|\sum_{i=1}^m\eps_i x_i\right\|_\infty
 \le C_\eta\sqrt{\kappa+1},
 \qquad
 \left|\sum_{i=1}^m\eps_i\right|
 \le C_\eta\sqrt{\kappa+1},
\]
except with probability at most
\[
 p(m)=C_\eta m\exp\!\left(
 -\frac{\kappa+1}{C_\eta\log^{2+\eta}(e(\kappa+1))}
 \right).
\]
We have enlarged $C_\eta$, if necessary, so that the same constant
works in both estimates. The first terminal bound comes from the
original coordinates, and the second comes from the appended quota
coordinate. Thus the two hypotheses of part~\textup{(i)} of
\cref{thm:blackbox-main} hold jointly with
$V(m)=W(m)=C_\eta\sqrt{\kappa+1}$.

Although the stream reaching a node is random, it is determined by
its strict-ancestor tapes. Conditional on those tapes, the node's
own signer is an independent copy on a fixed stream: this is the
conditioning argument in \cref{thm:blackbox-main}, so its failure
bound applies at every node. At each depth, the node capacities sum
to at most $T$, and there are at most $h$ internal levels, so
\[
 \sum_{u\text{ internal}}p(m_u)
 \le C_\eta Th\exp\!\left(
 -\frac{\kappa+1}{C_\eta\log^{2+\eta}(e(\kappa+1))}
 \right)\le\delta.
\]
With probability at least $1-\delta$, all local terminal bounds
hold. On this single event, \eqref{eq:reduction-norm} gives, for
every output prefix,
\[
 \|\Res_k(v_\pi)\|_\infty
 \le\sum_{u\in P(k)}
   \left(C_\eta\sqrt{\kappa+1}+\1\{m_u\text{ odd}\}\right)
 \le(C_\eta\sqrt{\kappa+1}+1)h.
\]
This gives
the stated asymptotic bound as $h \in O(\log T)$.
\end{proof}

%% file: sections/05_lower_bounds.tex
\section{Lower bounds}
\label{sec:lower}
In this section, we prove lower bounds that match our deterministic
upper bounds and, for constant and polynomially small failure probabilities, our randomized upper bounds.
All our hard inputs use the same zero-sum completion: stop an initial stream and fill the remaining arrivals with identical vectors that
cancel its total. 
For the randomized bounds, the initial stream consists of coordinate cycles, stopped at a time chosen independently of the
algorithm's random tape. 
For the deterministic bound in a plane, the initial stream instead depends on the algorithm's previous assignments.

\begin{theorem}[Deterministic and randomized OSP lower bounds]
\label{thm:random-lower-overview}
\label{lem:scalar-extraction}
\label{lem:scalar-tail}
\label{lem:det-lower-bound-higher-dim}
For $T\ge2$ and $d\ge1$, it holds that
\[
 \OS_{\det}(d,T)\in
 \begin{cases}
 \Omega(\log T),&d=1,\\
 \Omega(\sqrt T),&d\ge2,
 \end{cases}
\]
and, for $0<\delta\le1/2$ and $d_*:=\min\{d,T\}$,
\[
 \OS_{\rm rand}(d,T,\delta)
 \in \Omega\left(\log T+\sqrt{d_*\log(T/d_*)}\right).
\]
\end{theorem}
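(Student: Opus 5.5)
The plan is to use the same template for all three lower bounds. Feed an initial stream $v_1,\ldots,v_n$, stop at some time $n$, and then send $T-n$ identical copies of $w=-\frac{1}{T-n}\sum_{t\le n}v_t$. We will take $n\le T/2$, so that $\|w\|_2\le 1$ whenever the initial total has norm at most $T/2$. Encode the run as a point set $\{(t,p_t)\}\subseteq[T]^2$, where $t$ is the arrival time and $p_t$ the assigned position. Because the completion vectors are all equal, the prefix sum at $k$ is
\[
\sum_{t\le n,\;p_t\le k}v_t+\#\{t>n:\;p_t\le k\}\,w .
\]
The prefix sums therefore measure how the initial vectors are distributed over the positions, compared with an evenly spread completion. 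Choosing the stopping time $n$ amounts to choosing the horizontal line in a rectangle-discrepancy problem: the algorithm commits its positions before it knows $n$. Once the completion is fixed, the resulting statistic $D(n,k)=\sum_{t\le n,\,p_t\le k}(v_t-w)$ plus lower-order terms is a rectangle count of a weighted point set.

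\textbf{Scalar bound ($d=1$).} The initial stream will be all $+1$, stopped at $n\le T/2$. The completion is $w=-n/(T-n)\in[-1,0)$, and the prefix sum at $k$ equals $A_n(k)-\frac{n}{T-n}(k-A_n(k))$, where $A_n(k)=\#\{t\le n:p_t\le k\}$. Up to a factor in $[1,2]$, this is $A_n(k)-nk/T$ corrected by $O(1)$ terms. So the prefix discrepancy is essentially the discrepancy of the $T$ points $(t/T,p_t/T)$ with respect to anchored rectangles $[0,n/T]\times[0,k/T]$, restricted to $n\le T/2$. Restricting to $n\le T/2$ is harmless: we either rescale, or we use the permutation structure, which makes the top half's discrepancy follow from the bottom half's. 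For a deterministic algorithm, Schmidt's theorem then gives that some $(n,k)$ has discrepancy $\Omega(\log T)$. Since the sequence is nonadaptive, the adversary simply picks that $n$.

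For the randomized bound, we use Tijdeman--Wagner's strengthening: for every point set, a constant fraction of $n$ (in a suitable measure, e.g. uniform over a grid of $\Theta(\log T)$ scales) give $\max_k$ discrepancy $\Omega(\log T)$. We then apply Yao's principle. If the adversary draws $n$ from this distribution, then for every random tape the failure probability exceeds $1/2\ge\delta$, so some fixed $n$ does as well. The constant fraction needs to exceed $1/2$, or else we must amplify. I will try first to obtain this from Tijdeman--Wagner in the measure form "$\int \sup_k |D|\,dn\gtrsim\log T$", combined with $|D|\le O(\log T)$ as an upper bound. That upper bound is not available, so the anticoncentration has to be argued directly from the theorem's statement. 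This is the main obstacle.

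\textbf{Dimension term.} Let $m=d_*$. The initial stream will cycle through $e_1,\ldots,e_m$, and $n$ will be a multiple of $m$, so the initial total is $(n/m)\mathbf 1_m$ and $w=-\frac{n}{m(T-n)}\mathbf 1$. Coordinate $j$ of the prefix at $k$ is then a scalar discrepancy for the subsequence of $T/m$ points of type $j$, which is essentially a point set of size $T/m$. Roth's orthogonal-function argument gives an $L^2$ lower bound $\Omega(\sqrt{\log(T/m)})$ for each coordinate, averaged over $(n,k)$. The coordinates share the same $n$, and the squared $\ell_2$ norm sums over $j$. Averaging over a random $n$ therefore gives $\mathbb E_n\max_k\|\cdot\|_2^2\gtrsim m\log(T/m)$. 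The restricted form of Roth's argument applies because we only need Haar functions on dyadic boxes inside $n\le T/2$. To convert this average into probability at least $1/2$, I would pair it with a second-moment upper bound from the same orthogonal expansion, or else use a Paley--Zygmund-type bound. This is the same anticoncentration difficulty as before.

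\textbf{Deterministic $\Omega(\sqrt T)$ in the plane.} Here the adversary is adaptive. Each arrival will be a unit vector orthogonal to the current partial configuration. Concretely, the adversary watches where the algorithm places the vectors and chooses $v_t\in\{\pm e_1,\pm e_2\}$ to push the position-weighted sum on the side where the placement is unbalanced. The first coordinate is used to force growth of the centered prefix statistic, while the second coordinate breaks ties, so that no single ordering can keep all prefixes small. A Schmidt-style argument over a polynomial number of stopping times then yields $\Omega(\sqrt T)$. I expect this argument to be routine once the correct potential is found: the potential is the sum of squared prefix statistics, which grows by at least $1$ per step.
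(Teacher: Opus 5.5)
Your scalar deterministic argument matches the paper's: the all-ones stream stopped at $n\le T/2$, the identity $\Res_k=\frac{T}{T-n}\bigl(A_n(k)-nk/T\bigr)$, Schmidt applied to the rescaled first $\lfloor T/2\rfloor$ points, and rounding to the grid. The other three parts each have a real gap.

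\textbf{Randomized bounds.} For the $\log T$ term, the obstacle you flag is not there. Tijdeman--Wagner's theorem is not a constant-fraction statement. For every fixed tape, at most $2N^{5/6}$ of the stopping times $s\le N=\lfloor T/2\rfloor$ have (grid-rounded) discrepancy below $c_0\log N$. Hence $\frac1N\sum_s\Pb(\Res_{\max}(I_s)>c\log N)\ge1-2N^{-1/6}>\frac12$, and some fixed $I_s$ fails with probability above $1/2$. No amplification is needed, and an averaged ``measure form'' alone would not suffice.

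For the dimension term your route genuinely fails. An average bound such as $\mathbb E_n\max_k\|\cdot\|_2^2\gtrsim m\log(T/m)$ cannot be turned into ``most $n$ are bad.'' Paley--Zygmund or a second-moment estimate needs an upper bound on the discrepancy, and an arbitrary algorithm has none: it may put all of Roth's $L^2$ mass on a few stopping times and be good on the rest. The missing idea is to localize Roth's argument to the success set:
\begin{itemize}
\item Fix a tape that succeeds on a set $S\subseteq[N]$ of stopping times with $p=|S|/N\ge1-\delta$; such a tape exists by averaging over tapes.
\item Put $E=\bigcup_{s\in S}((s-1)/N,s/N)$ and prove $\int_E\int_0^1 D_j^2\,dy\,dx\ge Cp^3\log(N+2)$ for each coordinate. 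The $x$-partition must halve the measure of $E$, not of $[0,1]$, at every level, combined with Haar functions on empty rectangles.
\item Success on $S$ plus grid rounding gives $\int_E\int_0^1\sum_jD_j^2\le p(\rho+3\sqrt d/2)^2$. With the one-cycle bound $\rho\ge\sqrt d/2$, this yields $\rho\gtrsim p\sqrt{d\log N}$ without any anticoncentration.
\end{itemize}
Also, coordinate $j$ of the residual is $a_j+\frac{s}{T-ds}\sum_\ell a_\ell$, not a statistic of the type-$j$ points alone. You need the nonnegative inner product of the two terms to get $\|\Res_k\|_2^2\ge\sum_ja_j^2$.

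\textbf{Planar bound.} Your concrete adversary with $v_t\in\{\pm e_1,\pm e_2\}$ cannot give $\Omega(\sqrt T)$. Any such input followed by identical completion copies contains at most five distinct vectors. The tree router whose signers alternate signs separately for each distinct value then has $V,W\le5$ at every node, so part~(i) of the reduction gives $O(\log T)$ on all of them. Schmidt's theorem is a logarithmic statement, so it is not the source of $\sqrt T$ either.

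What works is Spencer's argument at a \emph{single} stopping time and a single prefix:
\begin{itemize}
\item Take $m=\lfloor T/4\rfloor$ and $k=\lfloor T/2\rfloor$. Let $\xi_t=\pm1$ record whether $v_t$ lands in $[k]$, and track $Z_t=\sum_{i\le t}\xi_iv_i$ and $U_t=\sum_{i\le t}v_i$.
\item Choose $v_t$ as a unit vector in the plane perpendicular to $Z_{t-1}$, which needs non-axis directions. Orient it so that $\langle U_{t-1},v_t\rangle\le0$.
\item This forces $\|Z_m\|_2=\sqrt m$ whatever the algorithm does, while keeping $\|U_m\|_2\le\sqrt m$.
\item Then $2\Res_k=Z_m+\frac{T-2k+\sum_t\xi_t}{T-m}U_m$, with coefficient at most $1/2$ in absolute value, so $\|\Res_k\|_2\ge\sqrt m/4$.
\end{itemize}
Controlling $U$ is essential, since otherwise the completion could cancel $Z_m$. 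Your potential over many prefixes is left unspecified. It is the restriction to one prefix statistic that makes the orthogonality trick immune to the algorithm's choice.
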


The deterministic bound for $d\ge2$ already holds in a fixed plane.
The oblivious hard inputs are independent of the algorithm's random tape.

\subsection{The input construction}
We first explain the construction of the hard inputs which will be used throughout this section.
Fix $T\ge2$. Consider the following input: after presenting $m\le T/2$ vectors $v_1,\ldots,v_m$ with $\norm{v_i}\le1$, present $(T-m)$ copies of $-1/(T-m)\sum_{i=1}^m v_i$.
These repeated vectors remain in the unit ball since $\norm{\sum_{i=1}^m v_i}\le m\le T-m$.
Let $p_i \in [T]$ be the position assigned to $v_i$ by the algorithm. Then the first $k$ output positions contain exactly $k-\lvert\{i\le m:p_i\le k\}\rvert$ repeated vectors, hence the prefix residual is
\begin{equation}
 \Res_k(v_\pi)=
 \sum_{\substack{i\le m\\p_i\le k}}v_i
 -\frac{k-\lvert\{i\le m:p_i\le k\}\rvert}{T-m}
       \sum_{i=1}^m v_i.
 \label{eq:scalar-prefix-identity}
\end{equation}
For the randomized lower bounds, suppose that $d\le T/2$ and set $N=\lfloor T/(2d)\rfloor$.
For each $s\in[N]$, the initial stream consists of $s$ cycles through the coordinate vectors, giving the fixed input
\begin{equation}
 I_s=\left(
 \underbrace{e_1,\ldots,e_d,\ldots,e_1,\ldots,e_d}_{s\text{ cycles}},
 \underbrace{-\frac{s\mathbf1}{T-ds},\ldots,
             -\frac{s\mathbf1}{T-ds}}_{T-ds\text{ copies}}\right),
 \label{eq:scalar-family}
\end{equation}
where $\mathbf1=(1,\ldots,1)\in\R^d$.
The point is that the algorithm cannot know when the coordinate cycles will stop.
Indeed, fix a random tape and let $p_{i,j}$ be the position assigned to $e_j$ in cycle $i$ when the algorithm runs on $I_N$. Clearly, these same positions describe its first $s$ cycles
on every $I_s$, since the inputs agree up to time $ds$.

\subsection{Deterministic lower bounds}
We prove a logarithmic lower bound in dimension one and a
square-root lower bound in every dimension at least two.
\begin{lemma}[Deterministic lower bounds]
\label{lem:deterministic-lower}
There exists an absolute constant $c>0$ such that, for every $T\ge2$
and $d\ge1$, every deterministic online ordering algorithm has a
fixed zero-sum unit-ball input satisfying
\[
 \Res_{\max}(v_\pi)\ge
 \begin{cases}
   c\log T, & d=1,\\
   c\sqrt T, & d\ge2.
 \end{cases}
\]
For $d=1$, the input can be chosen from the family
\eqref{eq:scalar-family}; for $d\ge2$, the input lies in a fixed plane.
\end{lemma}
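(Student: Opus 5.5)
For $d=1$ I would use the family \eqref{eq:scalar-family} with $d=1$ and reduce to a planar discrepancy problem. For $d\ge2$ I would use an adaptive "orthogonal-step" adversary at the single prefix $k=\lfloor T/2\rfloor$.

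\emph{Case $d=1$.} Set $N=\lfloor T/2\rfloor$ and run the deterministic algorithm on $I_N$. Let $p_1,\ldots,p_N$ be the positions of the leading ones. Because the algorithm is deterministic, the same positions describe the first $s$ arrivals of every $I_s$. Substituting $v_i=1$ into \eqref{eq:scalar-prefix-identity}, with $A_k(s):=|\{i\le s:p_i\le k\}|$, gives
\[
 \Res_k=A_k(s)-\frac{k-A_k(s)}{T-s}\,s=\frac{T}{T-s}\Bigl(A_k(s)-\frac{sk}{T}\Bigr).
\]
Since $T/(T-s)\ge1$, it suffices to find some $(s,k)$ with $|A_k(s)-sk/T|\ge c\log T$.

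The quantity $A_k(s)-sk/T$ is the local discrepancy of the $N$-point set $\{(i/N,p_i/T)\}\subseteq[0,1]^2$ on the anchored box $[0,s/N]\times[0,k/T]$, measured against $N\cdot\frac{s}{N}\cdot\frac kT$. Schmidt's theorem~\cite{Schmidt1972} gives an anchored box of discrepancy at least $c\log N$. The $p_i$ are distinct and the first coordinates are the grid $i/N$, so each row and each column contains at most one point. Hence rounding an arbitrary box to a grid box $(s,k)$ changes both the count and the area term by $O(1)$, and we get $c\log T-O(1)$. For small $T$ the claim is trivial after adjusting $c$. If instead the paper's Tijdeman--Wagner~\cite{TW80} form is preferred, it gives the same conclusion directly for such permutation-type point sets.

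\emph{Case $d\ge2$.} Take $m=\lfloor T/4\rfloor$, $k=\lfloor T/2\rfloor$, and work in $\R^2$. Track $S_i=\sum_{j\le i}v_j$ and
\[
 D_i=\sum_{j\le i,\;p_j\le k}v_j-\tfrac12 S_i .
\]
At step $i$ the adversary picks a unit vector $v_i\perp D_{i-1}$, using any unit vector if $D_{i-1}=0$. It fixes the sign of $v_i$ so that $\ip{S_{i-1}}{v_i}\le0$. The algorithm then places $v_i$, and $D$ changes by $\pm v_i/2$ depending on whether $p_i\le k$. By orthogonality,
\[
 \|D_i\|_2^2=\|D_{i-1}\|_2^2+\tfrac14,
\]
while the sign choice gives $\|S_i\|_2^2\le\|S_{i-1}\|_2^2+1$. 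Thus $\|D_m\|_2\ge\sqrt m/2$ and $\|S_m\|_2\le\sqrt m$.

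After these $m$ steps, complete the input as in the construction with $T-m$ copies of $-S_m/(T-m)$. By \eqref{eq:scalar-prefix-identity},
\[
 \Res_k=D_m+S_m\Bigl(\tfrac12-\tfrac{k-A}{T-m}\Bigr),\qquad A=|\{i\le m:p_i\le k\}|\in[0,m].
\]
The coefficient of $S_m$ lies in roughly $[-1/6,1/3]$, so
\[
 \|\Res_k\|_2\ge \tfrac12\sqrt m-\tfrac13\sqrt m-O(1)=\Omega(\sqrt T).
\]

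The main obstacle is balancing constants in the $d\ge2$ case. One must choose $m/T$ small enough that the correction $S_m(\cdot)$ cannot cancel the forced growth of $D$, while keeping $m=\Theta(T)$. One must also handle floors for small $T$, which is absorbed into $c$. In the scalar case the delicate point is making the rounding from Schmidt's continuous boxes to grid boxes lose only $O(1)$.
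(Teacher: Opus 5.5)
Your proposal is correct and follows essentially the same route as the paper. For $d=1$ it applies Schmidt's theorem to the arrival/position points of $I_N$, rounds to a grid box, and uses the completion identity; for $d\ge2$ it uses the Spencer-style adversary at $k=\lfloor T/2\rfloor$, with $m=\lfloor T/4\rfloor$ steps orthogonal to $D_{i-1}=Z_{i-1}/2$ and signed against $S_{i-1}$. The one point you should state explicitly is that, because the algorithm is deterministic, the adaptively chosen planar stream can be simulated in advance, and this yields the fixed input the lemma requires; separately, your coefficient range is in fact about $[-1/6,1/6]$, which only improves the constant.
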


\begin{proof}
On the line, let us take $N=\lfloor T/2\rfloor$ and write
$p_i=p_{i,1}$ for the common positions in the construction above.
Let $c_0>0$ be the absolute constant in Schmidt's
theorem~\cite[Theorem~2]{Schmidt1972}. Applied to the points
$((i-1)/N,(p_i-1)/T)$, it gives $x,y\in[0,1]$ such that
\[
 \left|
 \left|\left\{
 i\le N:
 \frac{i-1}{N}<x,\ 
 \frac{p_i-1}{T}<y
 \right\}\right|
 -Nxy
 \right|
 \ge c_0\log N.
\]
Here the cardinality counts the points in $[0,x)\times[0,y)$,
while $Nxy$ is $N$ times the area of this rectangle. We now set $s=\lceil Nx\rceil$ and $k=\lceil Ty\rceil$, and note that rounding the upper sides to $s/N$ and $k/T$ preserves the point count because $(i-1)/N < x$ if and only if $i \le s$ and $(p_i-1)/T < y$ if and only if $p_i \le k$. 
In addition, this changes the area term by at most
\[
 0\le \frac{sk}{T}-Nxy
 =
 N\left(\frac{s}{N}-x\right)y
 +s\left(\frac{k}{T}-y\right)
 \le 1+\frac{N}{T}\le\frac32.
\]
The reverse triangle inequality therefore gives
\[
 \left|
 \lvert\{i\le s:p_i\le k\}\rvert-\frac{sk}{T}
 \right|
 \ge c_0\log N-\frac32
 \ge c_0\log N-2.
\]
On the input $I_s$, the completion identity
\eqref{eq:scalar-prefix-identity} becomes
\begin{equation}
 \Res_k(v_\pi)
 =
 \frac{T}{T-s}
 \left(\lvert\{i\le s:p_i\le k\}\rvert-\frac{sk}{T}\right).
 \label{eq:scalar-lower-cost}
\end{equation}
Since $T/(T-s)\ge1$, this proves the logarithmic lower bound
for all sufficiently large $T$.

For the planar bound, we proceed in a similar way to Spencer \cite{Spencer77,Spencer94}, and present the proof for completeness. Indeed, let us assume $T\ge8$, take
$m=\lfloor T/4\rfloor$, and focus on the output prefix
$k=\lfloor T/2\rfloor$.
During the first $m$ arrivals, let $\xi_t=+1$ if the algorithm assigns
$v_t$ to one of the first $k$ positions, and $\xi_t=-1$ otherwise,
and write $U_t=\sum_{i=1}^t v_i, Z_t=\sum_{i=1}^t \xi_i v_i$ with $U_0=Z_0=0$.
We choose the vectors so that the difference $Z_t$ between the
two groups grows, while their total $U_t$ stays small enough that
the completion cannot cancel this difference. At time $t$, choose a unit vector $v_t$ in the plane spanned by
$e_1,e_2$ perpendicular to $Z_{t-1}$, reversing its orientation
if necessary so that $\ip{U_{t-1}}{v_t}\le0$.
This choice uses only previous assignments, and whichever value
of $\xi_t$ the algorithm then produces,
\[
 \norm{Z_t}^2=\norm{Z_{t-1}}^2+1,
 \qquad
 \norm{U_t}^2\le\norm{U_{t-1}}^2+1.
\]
Hence $\norm{Z_m}=\sqrt m$ and $\norm{U_m}\le\sqrt m$. We now complete the stream with $T-m$ copies of $-U_m/(T-m)$.
Among the initial vectors, those assigned to the first $k$
positions have sum $(U_m+Z_m)/2$ and number
$(m+\sum_{t=1}^m \xi_t)/2$, so
\eqref{eq:scalar-prefix-identity} gives
\[
 2\Res_k(v_\pi)
 =
 Z_m+\frac{T-2k+\sum_{t=1}^m \xi_t}{T-m}U_m.
\]
Since $T-2k\in\{0,1\}$ and
$\lvert\sum_{t=1}^m \xi_t\rvert\le m$, the triangle inequality yields
\[
 2\norm{\Res_k(v_\pi)}
 \ge
 \left(1-\frac{m+1}{T-m}\right)\sqrt m
 \ge \frac{\sqrt m}{2}
 \ge \frac{\sqrt T}{4\sqrt2},
\]
where $(m+1)/(T-m)\le1/2$ and $m\ge T/8$ for $T\ge8$.
Choosing $c>0$ sufficiently small gives both bounds for large $T$;
the remaining bounded values follow from the one-cycle bound below
after further decreasing $c$. All vectors lie in the same fixed plane, and although the initial stream was described adaptively, for any deterministic algorithm its assignments can be simulated in advance, producing a fixed input with the claimed cost.
\end{proof}

\begin{remark}
    It is worth noting that the one-cycle input $I_1$, using $\min\{d,\lfloor T/2\rfloor\}$ coordinates, has discrepancy at least $c\sqrt{\min\{d,T\}}$ under every ordering, for an absolute constant $c>0$. Indeed, consider the one-cycle input $I_1$ in dimension $d\le T/2$.
For any ordering, linearly interpolate between consecutive prefix
sums and stop during a repeated vector, or at its endpoint, when
exactly half of the repeated copies have been traversed.
Their contribution is then $-1/2$ in each coordinate, while
each basis vector has either appeared in full or has not appeared
yet, so every coordinate of the interpolated sum is $-1/2$ or $1/2$.
Its norm is therefore $\sqrt d/2$, and convexity of the norm implies
that at least one adjacent genuine prefix has norm at least as large.
\end{remark}

\subsection{Randomized lower bounds}
We now show that randomized algorithms can also have a large prefix sum on some fixed input. 
We start by proving the first term $\Omega(\log T)$ of the randomized lower bound, which holds already in dimension one.
\begin{lemma}[Randomized lower bound for $d=1$]
\label{lem:random-scalar-lower}
There exists an absolute constant $c>0$ such that, for every $T\ge2$ and every randomized online ordering algorithm, some
fixed input from the scalar family \eqref{eq:scalar-family}
satisfies
\[
 \Pb\!\left(\Res_{\max}(v_\pi)>c\log T\right)>\frac{1}{2}.
\]
\end{lemma}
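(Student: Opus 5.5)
Fix a random tape and let $p_i=p_{i,1}$ be the position of the $i$th $+1$ when the algorithm runs on $I_N$, with $N=\lfloor T/2\rfloor$. As in \cref{lem:deterministic-lower}, these are also the positions of the first $s$ ones on every $I_s$. By \eqref{eq:scalar-lower-cost}, on $I_s$ we have $|\Res_k(v_\pi)|\ge D(s,k):=\bigl||\{i\le s:p_i\le k\}|-sk/T\bigr|$. For a fixed tape, $D$ is the rectangle discrepancy of the point set $P=\{((i-1)/N,(p_i-1)/T)\}$, up to the $O(1)$ rounding loss already handled there. The difficulty is that Schmidt's theorem only supplies one bad corner $(s,k)$ for each tape, and this corner depends on the tape. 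A randomized algorithm could therefore spread its failures over many $s$, so that no single fixed $I_s$ fails with probability above $1/2$. I would replace Schmidt by the stronger Tijdeman--Wagner theorem \cite{TW80}. In a form suitable here, it says that for every $N$-point set in the unit square, the set of $x\in[0,1]$ with $\sup_y D(x,y)\ge c_1\log N$ has Lebesgue measure at least $3/4$. Equivalently, in the sequence formulation, a positive fraction of the prefix lengths $s\le N$ have large discrepancy. This is exactly the averaging tool needed.

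The plan has three steps. First, for each tape $\omega$, define $B_\omega=\{s\in[N]:\max_k D_\omega(s,k)\ge c_1\log N-2\}$. Tijdeman--Wagner, applied to the point set and rounded exactly as in \cref{lem:deterministic-lower}, should give $|B_\omega|>N/2$ for $N$ large. I have to check that the measure statement in $x$ transfers to counting $s=\lceil Nx\rceil$. Each $s$ corresponds to an interval of $x$ of length $1/N$, so measure $3/4$ gives more than $N/2$ integers once the constant $3/4$ is chosen suitably, possibly after first strengthening it by rescaling $c_1$. Second, average over the tape: $\sum_{s}\Pb(s\in B_\omega)=\mathbb E|B_\omega|>N/2$, so some $s^\ast$ satisfies $\Pb(s^\ast\in B_\omega)>1/2$. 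Third, on the fixed input $I_{s^\ast}$, which does not depend on the tape, the event $s^\ast\in B_\omega$ gives a prefix $k$ with $|\Res_k|\ge c_1\log N-2$ by \eqref{eq:scalar-lower-cost} and $T/(T-s)\ge1$. Choosing $c$ small then gives $\Res_{\max}>c\log T$ with probability greater than $1/2$ for large $T$. The strict inequality requires a little care, and it is handled by the slack in the constants.

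For small $T$, I would take the one-cycle input $I_1$ with $d=1$: the values are $1$ followed by $T-1$ copies of $-1/(T-1)$. Every ordering has some prefix residual of size at least about $1/2$ by the remark, and this bound is deterministic. Decreasing $c$ then covers every bounded range of $T$.

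The main obstacle is the first step: stating Tijdeman--Wagner in the ``positive fraction of $x$'' form and verifying that the discretization to integers $s$ keeps the fraction strictly above $1/2$. If their constant gives only a fraction $\gamma<1/2$, I would instead use a superset of inputs. Split $[N]$ into few groups and use the fact that a bad set of measure $\ge\gamma$ must meet some dyadic block with conditional density above $1/2$. Alternatively, I would apply their theorem to subsets of $x$-intervals, since the theorem holds for every subinterval of length comparable to $1$ after rescaling and loses only a constant in the $\log N$.
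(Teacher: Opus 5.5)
Your proposal is correct and follows the same route as the paper's proof: fix the tape, apply Tijdeman--Wagner to $((p_i-1)/T)$ so that most stopping times $s\le N$ have prefix discrepancy $\Omega(\log N)$ after $O(1)$ grid rounding, average over the tape to find a single $I_s$ that fails with probability above $1/2$, and cover bounded $T$ with the one-cycle input $I_1$. Your worry that the bad fraction might be below $1/2$ does not arise, because the theorem as used in the paper bounds the number of good indices by $2N^{5/6}$; your fallbacks are therefore unnecessary, and as sketched they would not lift a density $\gamma\le1/2$ above $1/2$ anyway, since the dense block depends on the tape.
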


\begin{proof}
Let $d=1$ and $N=\lfloor T/2\rfloor$. The family \eqref{eq:scalar-family} becomes
\[
 I_s=\left(
 \underbrace{1,\ldots,1}_{s\text{ copies}},
 \underbrace{-\frac{s}{T-s},\ldots,-\frac{s}{T-s}}_{T-s\text{ copies}}
 \right),
 \qquad s\in[N].
\]
Let us write $\Res_{\max}(I_s)$ for the maximum residual of the algorithm's output on $I_s$, and let us fix a random tape so that the algorithm's decisions are deterministic, with $p_i$ being the output position assigned to the $i$th arriving copy of $1$ on $I_N$.
Using the same tape, its first $s$ assignments on $I_s$ are also $p_1,\ldots,p_s$, since both inputs begin with $s$ copies of $1$.
For $s\in[N]$, denote the discrepancy over output prefixes as
\[
 \Delta_s:=
 \max_{0\le k\le T}
 \left|
   \lvert\{i\le s:p_i\le k\}\rvert-\frac{sk}{T}
 \right|.
\]
By \eqref{eq:scalar-lower-cost}, we have
$\Delta_s=(T-s)/T \cdot \Res_{\max}(I_s)\le\Res_{\max}(I_s)$.
To apply Tijdeman--Wagner's theorem on an infinite sequence in $[0,1)$, we first append infinitely many zeros after $((p_i-1)/T)_{i=1}^N$.
By Tijdeman--Wagner's theorem~\cite[Theorem~1]{TW80}, there exists an absolute constant $c_0>0$ such that for every fixed tape,
\[
 \left|\left\{1\le s\le N:\Delta_s+\frac{1}{2}<c_0\log N\right\}\right|
 \le 2N^{5/6}.
\]
Here the term $1/2$ comes from restricting interval endpoints to multiples of $1/T$.
Now, for any $y\in[0,1)$, we round $y$ upwards to $\lceil Ty\rceil/T$, and, since all points lie on this grid, the intervals $[0,y)$ and $[0,\lceil Ty\rceil/T)$ contain the same points.
The term $sy$ changes by at most $s/T \leq 1/2$,
so the discrepancy over arbitrary intervals $[0,y)$ is at most $\Delta_s+1/2$.

Let $c_0$ be the constant in Tijdeman--Wagner's theorem and set
$c_1=\min\{c_0/4,1/26\}$. If $c_1\log T<1/2$, the input $I_1$
gives the desired bound with probability one. Otherwise,
$\log T\ge 13$, so $N>2^{12}$ and $\log N>1/2 \cdot \log T$, giving $c_1\log T+ 1/2 \le 2c_1\log T <c_0\log N$. This gives that, for any $s$ with $\Res_{\max}(I_s)\le c_1\log N$,
\[
 \Delta_s+\frac{1}{2}
 \le \Res_{\max}(I_s)+\frac{1}{2}
 \le c_1\log N+\frac{1}{2}
 <c_0\log N.
\]
Therefore for each random tape, there are at most
$2N^{5/6}$ indices with $\Res_{\max}(I_s)\le c_1\log N$.
Averaging over the algorithm's random tape gives
\[
 \frac{1}{N}\sum_{s=1}^N
 \Pb\!\left(\Res_{\max}(I_s)>c_1\log N\right)
 \ge 1-2N^{-\frac{1}{6}}>\frac{1}{2},
\]
so there exists a fixed $I_s$ which has maximum residual greater than $c\log N$ with probability greater than $1/2$. Hence, some fixed
input $I_s$ satisfies
\[
 \Pb\!\left(\Res_{\max}(I_s)>c\log T\right)>\frac12,
\]
for $c = c_1/2$, which concludes the proof.
\end{proof}
The following lemma derives the second term of the randomized lower bound for any dimension.
\begin{lemma}[Randomized lower bound in arbitrary dimension]
\label{lem:random-dimension-lower}
There exists an absolute constant $c>0$ such that the following holds.
Let $T\ge2$, $d\ge1$, and $0<\delta<1$.
If a randomized online ordering algorithm satisfies
\[
 \Pb\!\left(\Res_{\max}(v_\pi)\le \rho\right)\ge1-\delta
\]
on every fixed zero-sum unit-ball input, then $\rho\ge
 c(1-\delta)\sqrt{d_*\log(T/d_*)}$, where $d_*:=\min\{d,T\}$.
\end{lemma}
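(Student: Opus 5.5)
We work with the coordinate-cycle family $I_s$ of \eqref{eq:scalar-family}, using Roth's orthogonal-function method in place of Tijdeman--Wagner, and average over $s\in[N]$ and over the random tape.

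\textbf{Reductions.} First assume $d\le T/2$; the remaining cases follow from the Remark and from $d_*\log(T/d_*)$ being comparable to $T/2$ there. Also assume $N=\lfloor T/(2d)\rfloor\ge2$, since otherwise the one-cycle bound $c\sqrt{d_*}$ of the Remark already suffices.

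\textbf{Step 1 (residual as a point-count discrepancy).} Fix a tape and the positions $p_{i,j}$ from the run on $I_N$. On $I_s$, coordinate $j$ of \eqref{eq:scalar-prefix-identity} equals
\[
 \Res_k(v_\pi)_j=\bigl|\{i\le s:p_{i,j}\le k\}\bigr|-\frac{s\,(k-A_s(k))}{T-ds},
\]
where $A_s(k)$ counts initial vectors in $[k]$. The correction term differs from $sk/T$ by an amount involving only the total count $A_s(k)$. For each $j$, set
\[
 D_j(s,k):=\bigl|\{i\le s:p_{i,j}\le k\}\bigr|-\frac{sk}{T},
\]
so $\Res_k(v_\pi)_j=D_j(s,k)-\frac{s}{T-ds}\sum_{j'}D_{j'}(s,k)\cdot\frac{T}{T}$ up to a common shift $\beta(s,k)$ that is identical across coordinates. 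Hence
\[
 \norm{\Res_k}^2\ \ge\ \sum_j\bigl(D_j-\beta\bigr)^2\ \ge\ \sum_j D_j^2-d\,\bar D^2,
\]
where $\bar D$ is the coordinate mean. Subtracting a common constant in this way can only lower the sum of squares to the variance. The difficulty is that the $d$ point sets might all be shifted identically, which this bound does not rule out.

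\textbf{Step 2 (handling the common mode).} To sidestep this, I would compare consecutive coordinates, $D_j-D_{j+1}$, which cancels $\beta$. Alternatively, apply Roth's argument to the $d$ point sets $\{((i-1)/N,(p_{i,j}-1)/T)\}_i$ but with Haar test functions chosen with alternating signs across $j$. I expect this to be the main obstacle. The positions $p_{i,j}$ for different $j$ interleave in time (arrival index $d(i-1)+j$), so for each $j$ the set of $N$ points has distinct $y$-coordinates. Roth's argument then gives each point set, with its own weighted Haar function over $\asymp\log(T/d)$ dyadic scales, an $L^2$ average of $\sum_j D_j^2$ over $(s,k)$ of order $d\log N$. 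The common-mode term must then be shown to be of lower order, or removed via a restricted Haar family of mean zero in $j$. This is the ``restricted form of Roth'' mentioned in the introduction.

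\textbf{Step 3 (from an average to probability).} Roth yields $\mathbb E_{s,k}\norm{\Res_k(I_s)}^2\ge c\,d\log(T/d)$ for each tape. We still need a lower-tail or high-probability statement. Since every residual is bounded by $O(\sqrt{dN})$, averaging alone does not give $\Res_{\max}$ large with probability $\ge1-\delta$ on a single $I_s$. Instead, I would use the $L^1$ version of Roth, namely Halász's argument, which gives $\mathbb E_{s}\max_k\norm{\Res_k(I_s)}\ge c\sqrt{d\log(T/d)}$ per tape.

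Averaging over the tape, suppose each $I_s$ satisfies $\Res_{\max}\le\rho$ with probability $\ge1-\delta$. Then on the good event at a uniformly random $s$, the residual is at most $\rho$. To convert the expectation bound, I would use a truncated quantity, namely Roth's inner product against a bounded test function, which is linear in the residuals. Writing it as $\langle F,\Res\rangle$ with $\|F\|$ normalized gives
\[
 c\sqrt{d\log}\ \le\ \mathbb E\,\langle F,\Res\rangle\ \le\ \rho+\delta\cdot(\text{bounded contribution}).
\]
This is not clean. The cleaner alternative is to choose the test function as a ratio so that each tape's contribution is at most $\Res_{\max}$. The event then contributes at least $(1-\delta)$ times the lower bound, giving $\rho\ge c(1-\delta)\sqrt{d_*\log(T/d_*)}$, which matches the stated dependence on $(1-\delta)$.
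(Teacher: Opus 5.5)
There are two genuine gaps.

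\textbf{Step~1 has a sign error, and Step~2 is then unnecessary.} Write $a_j=D_j(s,k)$ and use $k-A_s(k)=k(T-ds)/T-\sum_\ell a_\ell$. The identity \eqref{eq:scalar-prefix-identity} on $I_s$ then gives
\[
 \Res_k(v_\pi)_j=a_j+\frac{s}{T-ds}\sum_{\ell=1}^d a_\ell ,
\]
with a plus sign. So the two terms have inner product $\frac{s}{T-ds}\bigl(\sum_\ell a_\ell\bigr)^2\ge0$, and $\norm{\Res_k(v_\pi)}^2\ge\sum_j D_j(s,k)^2$ directly. The common mode can only increase the residual, and this one-line observation is all the paper uses.

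Your proposed remedies in Step~2 would in fact fail. Suppose the algorithm places each cycle in a block of consecutive positions. Then $|D_j-D_{j+1}|\le1$ everywhere, so consecutive differences, or any test family with mean zero in $j$, detect only $O(1)$. Yet in that case all of the discrepancy sits in the common mode. Relatedly, the ``restricted form of Roth'' in the paper means restricting the $x$-integration to successful stopping times, not using a mean-zero family in $j$.

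\textbf{Step~3, the more serious gap, is left unresolved.} A lower bound on an average over all $s\in[N]$ cannot yield the statement, whether it is Roth's $L^2$ bound or Hal\'asz's $L^1$ bound. On the failing stopping times, up to a $\delta$ fraction of $[N]$, nothing is controlled. There $\norm{D(s,k)}$ can be of order $\sqrt d\,\delta N\gg\sqrt{d\log N}$, so these $s$ alone can supply the entire average. No normalization or ``ratio'' choice of test function fixes this. The lower bound must be produced by the successful stopping times alone.

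The paper does this as follows.
\begin{enumerate}
\item Averaging over tapes, fix one tape that succeeds on a set $S\subseteq[N]$ with $p=|S|/N\ge1-\delta$, and put $E=\bigcup_{s\in S}((s-1)/N,s/N)$.
\item Prove $\int_E\int_0^1D_j^2\,dy\,dx\ge Cp^3\log(N+2)$ for each $j$. This is Roth's argument with an $x$-partition that splits each interval into halves of equal $E$-measure rather than equal length. The Haar functions then still integrate to zero over $E$ and stay orthogonal on $E\times[0,1]$. The numerator is bounded via $\int_{E\cap I_+}x\,dx-\int_{E\cap I_-}x\,dx\ge q^2$.
\item Success on $S$, together with a grid-to-cell error of at most $3/2$ per coordinate, gives the matching upper bound $p(\rho+3\sqrt d/2)^2$ on the same integral.
\item The one-cycle input forces $\rho\ge\sqrt d/2$, which absorbs the additive $3\sqrt d/2$.
\end{enumerate}
Comparing the two bounds yields $\rho\ge c\,p\sqrt{d\log(N+2)}$, which is exactly where the factor $1-\delta$ in the statement comes from.
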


\begin{proof}
When $d>T/2$, the one-cycle input in
\cref{lem:deterministic-lower} has cost $\Omega(\sqrt{d_*})$
under every ordering and, since $\log(2+T/d_*) \in O(1)$, this proves
the claim for sufficiently small $c>0$.

Therefore we assume that $d\le T/2$ and let $N=\lfloor T/(2d)\rfloor$.
The expected fraction of inputs $I_1,\ldots,I_N$ on which the
algorithm succeeds is at least $1-\delta$, and so we fix
a tape that succeeds on a set $S\subseteq[N]$ with probability $p:=|S|/N\ge1-\delta$.
For its positions $p_{i,j}$, define for each $j=1,\ldots,d$ and $(x,y)\in[0,1]^2$,
\[
 D_j(x,y):=
 \left|\left\{
 i\in[N]:
 \frac{i-1}{N}<x,\ 
 \frac{p_{i,j}-1}{T}<y
 \right\}\right|-Nxy,
\]
and for fixed $s\in[N]$ and $0\le k\le T$, write $a_j:=D_j\left(s/N,k/T\right) =\lvert\{i\le s:p_{i,j}\le k\}\rvert-sk/T$. The identity in \eqref{eq:scalar-prefix-identity} can thus be rewritten as
\[
 \Res_k(v_\pi)
 =(a_j)_{j=1}^d
 +\frac{s}{T-ds}\left(\sum_{j=1}^d a_j\right)\mathbf1.
\]
Note that these two terms have nonnegative inner product since
\[
 \left\langle
 (a_j)_{j=1}^d,\,
 \frac{s}{T-ds}\left(\sum_{\ell=1}^d a_\ell\right)\mathbf1
 \right\rangle
 =
 \frac{s}{T-ds}
 \left(\sum_{\ell=1}^d a_\ell\right)
 \left(\sum_{j=1}^d a_j\right)
 =
 \frac{s}{T-ds}\left(\sum_{j=1}^d a_j\right)^2
 \ge0,
\]
and therefore, for every $s\in S$
and $0\le k\le T$, success on $I_s$ implies
\[
 \sum_{j=1}^d D_j\left(\frac{s}{N},\frac{k}{T}\right)^2
 \le \|\Res_k(v_\pi)\|_2^2
 \le \rho^2.
\]
We now define the set $E=\bigcup_{s\in S}((s-1)/N,s/N)$, so that each successful
stopping time contributes an interval of length $1/N$ and
$|E|=|S|/N=p$. We extend the discrepancy bound at successful
grid points to these intervals to show that
\begin{equation}
 \int_E\int_0^1\sum_{j=1}^d D_j(x,y)^2\,dy\,dx
 \le p\left(\rho+\frac{3\sqrt d}{2}\right)^2.
 \label{eq:dimension-success-integral}
\end{equation}
For any $1\le s\le N$, $1\le k\le T$, and
$(x,y)\in((s-1)/N,s/N)\times((k-1)/T,k/T)$, we have
\[
 \frac{i-1}{N}<x\iff i\le s,
 \qquad
 \frac{p_{i,j}-1}{T}<y\iff p_{i,j}\le k.
\]
Thus $(x,y)$ and the upper corner $(s/N,k/T)$ count exactly
the same points, so only the area term changes:
\[
 0\le D_j(x,y)-D_j\left(\frac{s}{N},\frac{k}{T}\right)
 =\frac{sk}{T}-Nxy =N\left(\frac{s}{N}-x\right)y
   +s\left(\frac{k}{T}-y\right)
 \le1+\frac{s}{T}\le\frac32.
\]
When $s\in S$, success bounds the discrepancy norm at the grid
corner by $\rho$, and the triangle inequality gives
\[
 \left(\sum_{j=1}^d D_j(x,y)^2\right)^{1/2}
 \le
 \left(\sum_{j=1}^d
 D_j\left(\frac{s}{N},\frac{k}{T}\right)^2\right)^{1/2}
 +\frac{3\sqrt d}{2}
 \le\rho+\frac{3\sqrt d}{2}.
\]
This holds almost everywhere on $E\times[0,1]$; squaring and
integrating over this set of area $p$ proves
\eqref{eq:dimension-success-integral}.

In the following claim, we prove a lower bound of the contribution of each $D_j$ to the integral \eqref{eq:dimension-success-integral}, i.e., a lower bound of $\int_E\int_0^1 D_j(x,y)^2\,dy\,dx$ for any each coordinate $j$. 
\begin{claim}\label{clm:restricted-roth}
Let $N\ge1$ and let $(u_i,v_i)\in[0,1]^2$ for $i=1,\ldots,N$.
For $(x,y)\in[0,1]^2$, define
\[
 D(x,y):=
 \left|\left\{i\in[N]:u_i<x,\ v_i<y\right\}\right|-Nxy.
\]
Then for every $E\subseteq[0,1]$ which is a union of intervals, with $p=|E|>0$, 
\[
 \int_E\int_0^1 D(x,y)^2\,dy\,dx
 \ge Cp^3\log(N+2),
\]
where $C>0$ is an absolute constant.
\end{claim}
\begin{proof}[Proof of \Cref{clm:restricted-roth}]
We apply Roth's empty-rectangle argument~\cite{Roth54} on $E\times[0,1]$, letting $n=\lceil\log_2(2N)\rceil$ for succinctness.
First we construct two recursive partitions of $[0,1]$, one for each coordinate.
\begin{itemize}
    \item For the $x$-coordinate, start with $[0,1]$. Split each interval
    $I$ at a point $z$ into a left child $I_-$ and a right child
    $I_+$ such that
    \[
     |E\cap I_-|=|E\cap I_+|=\frac{|E\cap I|}{2}.
    \]
    Thus the split divides the length of the part of $E$ inside
    $I$ equally between the children; the children themselves
    need not have equal lengths. Such a point $z$ exists because
    the length of $E$ to its left varies continuously with $z$.
    Let $\mathcal I_a$ denote the collection of intervals after
    $a$ rounds of splitting. It contains $2^a$ intervals, each
    satisfying $|E\cap I|=p2^{-a}$.
    \item For the $y$-coordinate, repeatedly split each interval at its
    midpoint. Let $\mathcal J_b$ denote the collection after $b$
    rounds. It contains $2^b$ intervals, each of length $2^{-b}$.
    We construct both partitions through level $n+1$, so that
    every interval used below also has two children.
\end{itemize}
For each integer $a=0,\ldots,n$, consider all rectangles
$I\times J$ with $I\in\mathcal I_a$ and
$J\in\mathcal J_{n-a}$, which satisfy $|E\cap I|=p2^{-a}, |J|=2^{-(n-a)}$. There are $2^a2^{n-a}=2^n$ such rectangles, with disjoint
interiors, and, since the point set defining $D$ has $N$ points,
at most $N\le2^{n-1}$ rectangles contain a point in their
interior. We select exactly $2^{n-1}$ rectangles whose interiors
contain no points, which over all $a$, selects
$(n+1)2^{n-1}$ rectangles. Next, for each selected rectangle $I\times J$, let us write $I_-,I_+$
for the left and right children of $I$, and $J_-,J_+$ for
the lower and upper halves of $J$. We then define
\[
 h_{I,J}(x,y)=
 \begin{cases}
  +1,&(x,y)\in(I_-\times J_-)\cup(I_+\times J_+),\\
  -1,&(x,y)\in(I_-\times J_+)\cup(I_+\times J_-),\\
  0,&(x,y)\notin I\times J,
 \end{cases}
\]
with values on the boundaries that may be set to zero. 
Applying the Cauchy--Schwarz inequality to $D$
and $\sum_{I,J}h_{I,J}$ gives
\begin{equation}\label{eq:roth-cauchy-schwarz}
    \int_E\int_0^1 D(x,y)^2\,dy\,dx
    \ge
    \frac{
    \left(\int_E\int_0^1
    D(x,y)\sum_{I,J}h_{I,J}(x,y)\,dy\,dx\right)^2
    }{
    \int_E\int_0^1
    \left(\sum_{I,J}h_{I,J}(x,y)\right)^2\,dy\,dx
    },
\end{equation}
where the sum is over all selected rectangles $(I,J)$. We will deal with the two terms denominator and numerator separately. First we compute the denominator: since the lengths of $E\cap I_-$ and $E\cap I_+$ are equal and so are $J_-$ and $J_+$, for every fixed $y$ and every fixed $x$, we have
\[
 \int_E h_{I,J}(x,y)\,dx=
 \int_0^1 h_{I,J}(x,y)\,dy=0, 
 \qquad
 \int_E\int_0^1 h_{I,J}(x,y)^2\,dy\,dx
 =|E\cap I|\,|J|
 =p2^{-n}.
\]
The functions $h_{I,J}$ associated with the selected rectangles
are pairwise orthogonal, since either the rectangles have disjoint
interiors, or the finer $x$-sign has integral zero over $E$
while the coarser sign is constant on that interval. Hence
\begin{equation}
 \int_E\int_0^1
 \left(\sum_{I,J}h_{I,J}(x,y)\right)^2\,dy\,dx
 =(n+1)2^{n-1}p2^{-n}
 =\frac{p(n+1)}{2}.
 \label{eq:roth-denominator}
\end{equation}
We next bound the numerator: for an empty rectangle, every point has at least one
coordinate outside the corresponding interval's interior.
Its counting indicator is constant almost everywhere in
that coordinate, so its integral against $h_{I,J}$ vanishes.
Consequently, only the term $-Nxy$ in $D$ contributes:
\[
 -\int_E\int_0^1 D(x,y)h_{I,J}(x,y)\,dy\,dx
 =
 N\int_{E\cap I}\int_J xyh_{I,J}(x,y)\,dy\,dx.
\]
To bound this expression, let $z$ be the splitting point of
$I$ and put $q=p2^{-a-1}$, the common length of
$E\cap I_-$ and $E\cap I_+$. Then
\begin{align*}
 \int_{E\cap I_+}x\,dx-\int_{E\cap I_-}x\,dx
 =
 \int_{E\cap I_+}(x-z)\,dx
 +\int_{E\cap I_-}(z-x)\,dx\ge 2\int_0^q t\,dt
 =q^2
 =\frac{p^2 2^{-2a}}{4}.
\end{align*}
The inequality holds because, among sets of length $q$
on either side of $z$, the interval adjacent to $z$
minimizes the integral of the distance from $z$.
For the two equal halves of $J$,
\[
 \int_{J_+}y\,dy-\int_{J_-}y\,dy=\frac{|J|^2}{4}.
\]
The definition of $h_{I,J}$ therefore gives
\begin{align*}
 -\int_E\int_0^1 D(x,y)h_{I,J}(x,y)\,dy\,dx
 &=
 N\left(
 \int_{E\cap I_+}x\,dx-\int_{E\cap I_-}x\,dx
 \right)
 \left(
 \int_{J_+}y\,dy-\int_{J_-}y\,dy
 \right)\\
 &\ge \frac{Np^2 2^{-2a}|J|^2}{16}
 =\frac{Np^2 2^{-2n}}{16}.
\end{align*}
Summing the preceding bounds and using
$2^n<4N$, we obtain
\begin{align*}
 -\int_E\int_0^1
 D(x,y)\sum_{I,J}h_{I,J}(x,y)\,dy\,dx
 \ge
 (n+1)2^{n-1}\frac{Np^2 2^{-2n}}{16}=\frac{Np^2(n+1)}{32\cdot2^n}
 \ge\frac{p^2(n+1)}{128}.
\end{align*}
The integral is negative, so squaring gives
\begin{equation}
 \left(\int_E\int_0^1
 D(x,y)\sum_{I,J}h_{I,J}(x,y)\,dy\,dx\right)^2
 \ge \frac{p^4(n+1)^2}{128^2}.
 \label{eq:roth-numerator}
\end{equation}
Substituting \eqref{eq:roth-denominator} and
\eqref{eq:roth-numerator} into \eqref{eq:roth-cauchy-schwarz} gives, for an absolute constant $C>0$,
\[
 \int_E\int_0^1 D(x,y)^2\,dy\,dx
 \ge Cp^3(n+1)
 \ge Cp^3\log(N+2),
\]
as required.
\end{proof}
We are now ready to conclude the proof of \Cref{lem:random-dimension-lower}. First apply \cref{clm:restricted-roth} to the points defining $D_j$ and $E$, we have
\begin{equation}
 \int_E\int_0^1 D_j(x,y)^2\,dy\,dx
 \ge Cp^3\log(N+2).
 \label{eq:coordinate-discrepancy-lower}
\end{equation}
We then sum up \eqref{eq:coordinate-discrepancy-lower} over $j$ and apply \eqref{eq:dimension-success-integral} to get
\[
 p\left(\rho+\frac{3\sqrt d}{2}\right)^2
 \ge Cdp^3\log(N+2).
\]
The one-cycle input forces $\rho\ge\sqrt d/2$, since its cost
is at least $\sqrt d/2$ under every ordering and the success
probability is positive. Hence $\rho+3\sqrt d/2\le4\rho$.
Using $p\ge1-\delta$ and $N=\lfloor T/(2d)\rfloor$, we obtain
\[
 \rho\ge c_0p\sqrt{d\log\left(2+\frac{T}{d}\right)}
 \ge c_0(1-\delta)\sqrt{d\log\left(2+\frac{T}{d}\right)}
\]
for an absolute constant $c_0>0$.
Eventually, we may choose $c\le c_0$ sufficiently small to include the case
$d>T/2$ discussed at the very beginning, which will complete the proof.
\end{proof}

\begin{proof}[Proof of \cref{thm:random-lower-overview}]
The deterministic assertions follow from
\cref{lem:deterministic-lower}.
For $0<\delta\le1/2$, a uniform randomized guarantee must satisfy
both the scalar lower bound of \cref{lem:random-scalar-lower},
embedded in the first coordinate, and the dimension lower bound
of \cref{lem:random-dimension-lower}.
Since $1-\delta\ge1/2$ and the maximum of two nonnegative
quantities is at least half their sum, these bounds give
\[
 \rho\ge
 c\left(
   \log T+\sqrt{d_*\log(2+T/d_*)}
 \right),
\]
for an absolute constant $c>0$, as claimed.
\end{proof}